\newtheorem{proposition}{Proposition}
\newtheorem{lemma}{Lemma}
\newtheorem{theorem}{Theorem}
\newtheorem{claim}{Claim}
\newtheorem{corollary}{Corollary}
\newcommand{\problem}{\textsc{Biased Graph Cleaning}\xspace}
\newcommand{\localproblem}{\textsc{Rooted Biased Graph Cleaning}\xspace}
\tikzset{circle split part fill/.style  args={#1,#2}{%
 alias=tmp@name, % Jake's idea !!
  postaction={%
    insert path={
     \pgfextra{% 
     \pgfpointdiff{\pgfpointanchor{\pgf@node@name}{center}}%
                  {\pgfpointanchor{\pgf@node@name}{east}}%            
     \pgfmathsetmacro\insiderad{\pgf@x}
      %\begin{scope}[on background layer]
      %\fill[#1] (\pgf@node@name.base) ([xshift=-\pgflinewidth]\pgf@node@name.east) arc
      %                    (0:180:\insiderad-0.5\pgflinewidth)--cycle;
      %\fill[#2] (\pgf@node@name.base) ([xshift=\pgflinewidth]\pgf@node@name.west)  arc
      %                     (180:360:\insiderad-0.5\pgflinewidth)--cycle;  
      \fill[#1] (\pgf@node@name.base) ([xshift=-\pgflinewidth]\pgf@node@name.east) arc
                          (0:180:\insiderad-\pgflinewidth)--cycle;
      \fill[#2] (\pgf@node@name.base) ([xshift=\pgflinewidth]\pgf@node@name.west)  arc
                           (180:360:\insiderad-\pgflinewidth)--cycle;            %  \end{scope}   
         }}}}}
\newcommand{\N}{\ensuremath{\mathbb N}\xspace}
\newcommand{\Q}{\ensuremath{\mathbb Q}\xspace}
\newcommand{\cB}{\ensuremath{\mathcal B}\xspace}
\newcommand{\cC}{\ensuremath{\mathcal C}\xspace}
\newcommand{\cP}{\ensuremath{\mathcal P}\xspace}
\newcommand{\maybeqed}{}
\date{}
\newcommand{\opt}{\mathsf{opt}}
\begin{document}

\title{LP-branching algorithms based on biased graphs\footnote{A
    preliminary version was presented at SODA 2017. This version
    contains additional material on approximation algorithms.}}

\author[1]{Euiwoong Lee\thanks{\texttt{euiwoong@cims.nyu.edu}}}
\author[2]{Magnus Wahlstr\"om\thanks{\texttt{Magnus.Wahlstrom@rhul.ac.uk}}}

\affil[1]{Computer Science Department, New York University, USA}
\affil[2]{Dept of Computer Science, Royal Holloway, University of London, UK}

%\pagenumbering{gobble} 
\maketitle

\begin{abstract}
We give a combinatorial condition for the existence of efficient, LP-based FPT algorithms 
for a broad class of graph-theoretical optimisation problems. 
Our condition is based on the notion of biased graphs known from matroid theory. 
Specifically, we show that given a biased graph $\Psi=(G,\cB)$, where $\cB$ is a class
of balanced cycles in $G$, the problem of finding a set $X$ of at most $k$
vertices in $G$ which intersects every unbalanced cycle in $G$ 
admits an FPT algorithm using an LP-branching approach, similar to 
those previously seen for VCSP problems (Wahlstr\"om, SODA 2014).
Our algorithm has two parts. First we define a \emph{local problem},
where we are additionally given a root vertex $v_0 \in V$ and asked
only to delete vertices $X$ (excluding $v_0$) so that the connected component of $v_0$ in $G-X$
contains no unbalanced cycle. We show that this local problem admits
a persistent, half-integral LP-relaxation
with a polynomial-time solvable separation oracle,
and can therefore be solved in FPT time via LP-branching,
assuming only oracle membership queries for the class of balanced cycles in $G$.
We then show that solutions to this local problem
can be used to tile the graph, producing an optimal solution to the 
original, global problem as well.
This framework captures many of the problems previously solved
via the VCSP approach to LP-branching, as well as new generalisations, 
such as Group Feedback Vertex Set for infinite groups
(e.g., for graphs whose edges are labelled by matrices). 
A major advantage compared to previous work is that 
it is immediate to check the applicability of the result for a given problem,
whereas
testing applicability of the VCSP approach
for a specific VCSP,
requires determining the existence of 
an embedding language
with certain algebraically defined properties,
which is not known to be decidable in general.
Additionally, we study the approximation question, and show that every
problem of this category admits an $O(\log \text{OPT})$-approximation
\end{abstract}

%\newpage
%\pagenumbering{arabic}  
\section{Introduction}

In recent years, we have seen a growing interest in the use of linear or
integer linear programming methods (LP/ILP) in parameterized complexity~\cite{IwataWY16,%Wahlstrom14SODA,
JansenK15ESA,GanianO18ILP,KolmanKT16SWAT}.
The appeal is clear. On the one hand, linear programming and general continuous relaxations comes with a very powerful toolbox for theoretical investigation. This promises to be a powerful hammer, especially for optimisation problems;
e.g., an FPT-size extended formulation, or more generally an FPT
separation oracle for an integral polytope would provide a way towards FPT
algorithms~\cite{KolmanKT16SWAT}.
The same can be said for Lenstra's algorithm and other methods for
solving complex problems in few variables~\cite{Lokshtanov15ILParXiv,CramptonGKW19TCS}.
On the other hand, it has been observed that ILP solvers, like SAT solvers,
frequently perform much better in practice than can be currently be
explained theoretically. It is appealing to study parameterizations
of ILP problems, to find structural reasons for this apparent 
tractability~\cite{JansenK15ESA,GanianO18ILP}.

Narrowing our focus, for many optimisation problems we have seen powerful
FPT results based on running an ILP-solver with performance guarantees.
These algorithms use LP-relaxations with strong structural properties
(\emph{persistence} and \emph{half-integrality}), which allows you to run
an FPT branching algorithm on top of the relaxation, typically getting a running time 
of $O^*(2^{O(k)})$ for a solution size of~$k$. More powerfully, these algorithms
also allow to bound the running time in terms of the \emph{relaxation gap},
i.e., the additive difference between the value of the LP-optimum
and the true integral optimum. For problems such as 
\textsc{Vertex Cover}~\cite{LokshtanovNRRS14ILP} and
\textsc{Multiway Cut}~\cite{CyganPPW13MWC}, 
this gives us FPT algorithms that find an optimum in time $O^*(2^{O(k-\lambda)})$,
where~$k$ is the size of the optimum and~$\lambda$ is the lower bound
given by the LP-relaxation of the input.
In this sense, these results can also be taken as a possible
``parameterized explanation'' of the success of ILP solvers for these
problems, although this approach of course only works for certain problems
as it is in general NP-hard to decide whether an ILP problem
has a solution matching the value of the LP-relaxation.

The main restriction for this approach is to find LP-relaxations with the
required structural properties (or even to find problems which admit such
LP-relaxations). In the cases listed above, this falls back on classical
results from approximation~\cite{NemhauserT75,GargVY04}, but further
examples proved elusive. A significant advancement was made 
relatively recently~\cite{IwataWY16} by recasting the search for LP-relaxations with
the required properties in algebraic terms, using a connection between
such LP-relaxations and so-called \emph{Valued CSPs} (VCSPs; see below). 
Using this connection, FPT LP-branching algorithms were provided
that significantly improved the running times for a range of problems,
including \textsc{Subset Feedback Vertex Set} and
\textsc{Group Feedback Vertex Set} in $O^*(4^k)$ time for solution size~$k$,
improving on previous records of~$O^*(2^{O(k \log k)})$ time~\cite{CyganPPW13,CyganPP16GFVS},
and \textsc{Unique Label Cover} in $O^*(|\Sigma|^{2k})$ time 
for alphabet $\Sigma$ and solution size~$k$, improving on 
a previous result of~$O^*(|\Sigma|^{O(k^2 \log k)})$~\cite{ChitnisCHPP16SICOMP}.
\textsc{Group Feedback Vertex Set} in particular is a meta-problem that 
includes many independently studied problems as special cases.

However, powerful though these results may be, the nature of the framework
makes it difficult to apply to a given combinatorial problem.
To do so would involve two steps. First, the problem must be phrased
as a VCSP. A VCSP asks to minimise the value of an
objective function $f(\phi)$ over assignments $\phi: V \to D$
from a finite domain $D$, where the objective $f(\phi)$ is in turn
usually given as a sum of bounded-size \emph{cost functions} $\sum_{i=1}^m f_i(\phi)$. 
To find such a formulation is not always easy, and sometimes it may be
impossible to capture a problem precisely. As an example, it is possible
to phrase \textsc{Feedback Vertex Set} as an instance of 
\textsc{Group Feedback Vertex Set} and hence arguably as a VCSP~\cite{IwataWY16},
but to discover such a formulation from first principles is not easy.
Second, given a desired VCSP formulation, it must be determined
whether the VCSP admits a \emph{discrete relaxation}, 
so that the framework can be applied~\cite{IwataWY16}.
It is not known in general how to decide the existence of such a relaxation.\footnote{The
results of~\cite{IwataWY16} are obtained by working backwards from known 
relaxations with the required properties, a list which includes 
separable $k$-submodular relaxations and arbitrary bisubmodular relaxations, 
and with a small extension can be made to cover a class of problems also
including so-called \emph{skew bisubmodular} functions~\cite{HuberKP14SICOMP}.
However, again, even given a specific target class it is usually at least
intuitively non-obvious whether a VCSP can be relaxed into the class or not.}

In short, although the results are powerful, they are also somewhat
inscrutable.\footnote{Since the publication of the preliminary version
  of this paper~\cite{Wahlstrom14SODA}, there have been a few further
  developments in this direction.  We note two in particular.
  Iwata, Yamaguchi and Yoshida~\cite{IwataYY18FOCS} provide a
  combinatorial algorithm that replaces the LP-solving step in the
  above applications, implying linear-time FPT algorithms for all of
  the problems cited above.  It is however currently not known whether
  such an algorithm exists for the general \problem problem considered
  in this paper.  In another direction, Reidl and the second author~\cite{ReidlW18ICALP}
  considered a relaxed version of the conditions for a discrete
  relaxation, thereby allowing the FPT LP-branching framework to be
  applied to further problems.}

In this paper, we instead give a combinatorial condition under which a
graph-theoretical problem admits an LP-relaxation with the required properties,
and hence an efficient FPT algorithm parameterized either by solution size,
or in particular cases by a relaxation gap parameter.
Our condition is based on the class of so-called \emph{biased graphs},
which are combinatorial objects of importance especially to matroid theory~\cite{Zaslavsky89Bias1,Zaslavsky91Bias2}.
We review these next.

\paragraph{Biased graphs.} 
A \emph{biased graph} is a pair~$\Psi=(G=(V,E), \cB)$ of a graph~$G$
and a set~$\cB \subseteq 2^E$ of simple cycles in~$G$, referred to
as the \emph{balanced cycles} of~$G$. with the property that if
two cycles~$C, C' \in \cB$ form a \emph{theta graph}
(i.e., a collection of three internally vertex-disjoint paths with shared endpoints), 
then the third cycle of $C \cup C'$ is also contained in~$\cB$. 
A cycle class $\cB$ with this property is referred to as \emph{linear}.
Dually, and more important to the present paper, a simple cycle $C$
is \emph{unbalanced} if $C \notin \cB$. The definition is equivalent to saying
that if $C$ is an unbalanced cycle, and if $P$ is
a path with endpoints in $C$ which is internally edge- and vertex-disjoint from $C$,
then at least one of the two new cycles formed by $C \cup P$ is also unbalanced.
We refer to a collection $\cC$ of cycles as a \emph{co-linear} cycle class
if the complement of $\cC$ is a linear class.
We say that an induced subgraph $G[S]$ of $G$ is balanced if $G[S]$ 
contains no unbalanced cycles. 

The basic problem considered in this paper is now defined as follows: 
Given a biased graph $\Psi=(G=(V,E), \cB)$ and an integer~$k$,
find a set of vertices~$X \subseteq V$ with~$|X|\leq k$ such that~$G-X$ is balanced. 
We refer to this as the \problem problem.
Our main result in this paper is that \problem is FPT by $k$,
with a running time of $O^*(4^k)$, assuming only access to
a membership oracle for the class $\cB$
(i.e., for a cycle $C$, given as a set of edges,
we can determine whether $C \in \cB$ with an oracle call).

An important example of biased graphs are \emph{group-labelled graphs}. 
Let $G=(V,E)$ be an oriented graph, and let the edges of $G$ be labelled 
by elements from a group $\Gamma=(D,\cdot)$, such that if an edge $uv \in E$
has label $\gamma \in D$, then the edge $vu$ (i.e., $uv$ traversed in the
opposite direction) has label $\gamma^{-1}$. Then the balanced cycles
of $G$ are the cycles $C$ such that the product of the edge labels of the cycle,
read in the direction of their traversal, is equal to the identity element $1_\Gamma$
of $\Gamma$. Note that the orientation of the edges serves only to make the
group-labelling well-defined, and has no bearing on which cycles we consider. 
It is easy to verify that this defines a linear class of cycles,
and hence gives rise to a biased graph $\Psi$.

The problem \textsc{Group Feedback Vertex Set} corresponds exactly to
\problem when $\Psi$ is defined by a group-labelled graph.
However, not all biased graphs can be defined via group labels, and
moreover, some group-labelled graphs can only be defined via an infinite
group $\Gamma$~\cite{DeVosFP14bias}. More examples of biased graphs follow below. 

Biased graphs were originally defined in the context of matroid theory.
Although this connection is not important to the present paper, we
nevertheless give a brief review. 
Each biased graph $\Psi$ gives rise to two matroids, the \emph{frame matroid}
and the \emph{lift matroid} of $\Psi$. These are important examples in
structural matroid theory. The \emph{Dowling geometries} $Q_n(\Gamma)$ 
for a group $\Gamma$, originally defined by Dowling~\cite{Dowling73}, 
are equivalent to frame matroids of complete $\Gamma$-labelled multigraphs. 
For more on matroids for biased graphs, see Zaslavsky~\cite{Zaslavsky89Bias1,Zaslavsky91Bias2}, as well as 
the series of blog posts on the \emph{Matroid Union} weblog~\cite{MUBias1,MUBias2,MUBias3}.

\paragraph{Our approach.}
Inspired by the previous algorithm for \textsc{Group Feedback Vertex Set}~\cite{IwataWY16}, 
our approach for \problem consists of two parts,
the \emph{local problem} and the \emph{global problem}. 
In the local problem, the input is a biased graph $\Psi=(G=(V,E), \cB)$
together with a \emph{root vertex} $v_0 \in V$ and an integer $k$, 
and the task is restricted to finding a set $X \subseteq V$ of vertices, 
$|X|\leq k$ and $v_0 \notin X$, such that the connected component of $v_0$ 
in $G-X$ is balanced. Equivalently, the local problem can be defined as 
finding a set $S \subseteq V$ with $v_0 \in S$ such that $G[S]$ is balanced
and connected, and $|N_G(S)| \leq k$. 
We refer to this local problem as \localproblem.

We show that \localproblem can be solved via an LP which is half-integral and has a 
stability property similar to persistence. This LP uses a formulation where the
constraints correspond to rooted cycles we refer to as \emph{balloons}. 
The formulation of this slightly unusual LP is critical to the tractability
of the problem. The possibly more natural approach of letting the obstacles
of the LP simply be unbalanced cycles would not work as well; for instance,
although \textsc{Feedback Vertex Set} is an instance of \problem (with balanced
cycles $\cB=\emptyset$), it is known that the natural cycle-hitting LP has an 
integrality gap of a factor of $\Theta(\log n)$~\cite{ChudakGHW98}.\footnote{Although formulations of LPs for \textsc{Feedback Vertex Set} exist with an integrality gap of~$2$~\cite{ChudakGHW98}, we are not aware of any similarities between these formulations and our local LP.
Moreover, other special cases of \problem, e.g., \textsc{Odd Cycle Transversal},
admit no constant-factor approximation unless the Unique Games Conjecture fails.}

The properties of the LP further imply that \localproblem has a
2-approximation, even for weighted instances, and can be solved
(in the unweighted case) in time~$O^*(4^{k-\lambda})$ where~$\lambda$
is the value of the LP-optimum, assuming access to a membership oracle
for the class $\cB$ as above. 
In particular, \localproblem can be solved in time~$O^*(2^k)$. 
We note that several independently studied problems arise
as special cases of \localproblem; see below.

In order to solve the global problem, \problem, we show that the local LP
obeys a strong persistence-like property, analogous to the 
\emph{important separator} property frequently used in graph separation 
problems~\cite{Marx06MWC}, which allows us to identify ``furthest-reaching''
local connected components when solving the local problem, such that
the connected components produced by the algorithm for the local problem
can be used to ``tile'' the original graph in a solution to the global problem.
A $O^*(4^k)$-time algorithm for \problem follows
(although the ``above lower bound'' 
perspective does not carry over to solutions for the global problem).

\paragraph{Results and applications.}

We summarise the above statements in the following theorems.
Let $\Psi=(G=(V,E), \cB)$ be a biased graph, where $\cB$ is 
defined via a membership oracle
that takes as input a simple cycle $C$, provided
as an edge set, and tests whether $C \in \cB$. 
Then the following apply.

\begin{theorem} \label{thm:local}
  Assuming a polynomial-time membership oracle for the class of balanced cycles,
  \localproblem admits the following algorithmic results:
  \begin{itemize}
  \item A polynomial-time 2-approximation;
  \item An FPT algorithm with a running time of~$O^*(4^{k-\lambda})$,
    where $\lambda \geq k/2$ is the value of the LP-relaxation of the problem;
  \item An FPT algorithm with a running time of~$O^*(2^k)$.
  \end{itemize}
  The 2-approximation holds even for weighted graphs.
\end{theorem}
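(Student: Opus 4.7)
The plan is to build a half-integral and persistent LP relaxation of \localproblem whose constraints range over \emph{balloons} at $v_0$---pairs $B=(P,C)$ in which $P$ is a path from $v_0$ to some vertex $u$ and $C$ is an unbalanced cycle through $u$ that is internally disjoint from $P$. Using one variable $x_v \in [0,1]$ per non-root vertex, the LP minimises $\sum_v x_v$ subject to $\sum_{v \in V(B) \setminus \{v_0\}} x_v \geq 1$ for every balloon $B$. Correctness reduces to showing that an integral $0/1$ vector $x$ is feasible if and only if $X=\{v : x_v=1\}$ is a valid local solution: any unbalanced cycle $C$ inside the $v_0$-component of $G-X$ extends, via a shortest $v_0$-to-$C$ path within that component, to a balloon missed by $X$, and conversely a balloon missed by $X$ is entirely contained in the $v_0$-component of $G-X$.

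The technical core is then three structural properties of this LP: (i) a polynomial-time separation oracle, which I would obtain by searching, for each candidate $u$, for a cheap $v_0$-to-$u$ path together with an unbalanced cycle through $u$ of low weight, invoking the membership oracle for $\cB$ to certify unbalance; (ii) half-integrality of extreme points, so that the LP lives on $\{0,\tfrac{1}{2},1\}^{V \setminus \{v_0\}}$; and (iii) a persistence property stating that every optimal LP solution $x^*$ extends to an integral optimum $X$ with $v \in X$ whenever $x^*_v=1$ and $v \notin X$ whenever $x^*_v=0$. The main obstacle will be the simultaneous proof of (ii) and (iii) for arbitrary biased graphs, where no finite group labelling need exist and no algebraic lift is available. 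I would attempt a structural reduction turning balloon constraints into vertex-cut constraints on an auxiliary object---such as a signed or doubled variant of $G$ in which balanced walks lift to closed walks and unbalanced cycles lift to $v_0$-separating obstructions---so that Nemhauser--Trotter-style uncrossing arguments become applicable to both properties at once.

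With (i)--(iii) in hand, the three items follow along familiar LP-branching lines. Rounding up every $v$ with $x^*_v \geq 1/2$ yields an integral solution of cost at most $2\lambda$; half-integrality guarantees feasibility since each balloon forces either a $1$-valued vertex or two $1/2$-valued vertices, and the argument is indifferent to vertex weights, giving the weighted $2$-approximation. Since the same rounding produces an integral solution of size at most $2\lambda$, any instance admitting a solution of size at most $k$ satisfies $\lambda \geq k/2$, and $\lambda > k$ certifies infeasibility. Finally, persistence lets us branch on any half-valued coordinate by fixing $x_v=0$ or $x_v=1$; each branch raises the LP value by at least $1/2$ (otherwise persistence is violated), producing the standard $O^*(4^{k-\lambda})$ search tree, which collapses to $O^*(2^k)$ via $\lambda \geq k/2$.
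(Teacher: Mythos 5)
Your high-level plan---a balloon-based LP with half-integrality and persistence, a separation oracle via the membership oracle, LP-branching on half-integral coordinates, and the observations that $\lambda>k$ certifies infeasibility and $\lambda\leq k/2$ lets us round---is the right shape, and it matches the paper in outline. But there are two concrete problems with the proposal as written, the first of which is a genuine error.

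First, your LP is the wrong LP. You impose $\sum_{v\in V(B)\setminus\{v_0\}} x_v \geq 1$ with a \emph{uniform} coefficient of $1$ on every vertex of the balloon. The paper instead weights the path part of the balloon twice and the cycle part once: for $B=(P,C)$ with knot vertex $v_\ell$, it uses $w(B)=2\sum_{v\in V(P)\setminus\{v_\ell\}}x_v+2x_{v_\ell}+\sum_{v\in V(C)\setminus\{v_\ell\}}x_v$, which equals $\ell_x(P_1)+\ell_x(P_2)$ when $B$ is split into two $v_0$--$v$ paths $P_1,P_2$ under the edge metric $\ell_x(uv)=(x_u+x_v)/2$. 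Both LPs have the same $0/1$ solutions, but the fractional relaxations are different, and the doubled coefficient on the path part is what makes the half-integrality and persistence arguments go through: it yields the trichotomy of tight-balloon types (supported in $V_R+v$ for a single $V_1$ vertex; hitting $N(V_R)$ in one path vertex with coefficient $2$; hitting it in two cycle vertices with coefficient $1$ each), which in turn supports the complementary-slackness bookkeeping. It also makes the LP optimum take the clean form $V_1+\tfrac12 V_{1/2}$ with $V_{1/2}=N(V_R)\setminus V_1$, so the $2$-approximation is simply $N(V_R)$. With your uniform coefficients the LP is strictly weaker (already on a single balloon with a long tail, your LP value is roughly twice the paper's), and there is no reason to expect that its extreme points are half-integral or that a persistence property holds; none of the paper's structural lemmas apply to your formulation.

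Second, even granting a correct LP, the core of the theorem---the simultaneous proof of half-integrality and persistence---is left as a speculative sketch. Your proposed device (lift to an auxiliary ``signed or doubled'' graph so that unbalanced cycles become $v_0$-separators, then apply Nemhauser--Trotter-style uncrossing) is unlikely to work at the claimed level of generality: the paper's whole point is to handle biased graphs that have \emph{no} group-labelled or signed representation, so there is no canonical lift available. The paper instead argues directly on the LP dual, using the balloon-type decomposition and the shortest-path structure of min-weight balloons (together with reconfigurations of unbalanced cycles via chord paths, which uses only the linear-class axiom). Your branching analysis is also slightly off: in the $x_v=1$ branch the LP value does not ``rise by at least $1/2$''; what decreases is the gap $k-\lambda$, because $k$ drops by $1$ while $\lambda$ may drop by up to $1/2$. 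The $x_v=0$ branch raises $\lambda$ by at least $1/2$ by half-integrality. These are fixable details, but the missing half-integrality/persistence proof for the (wrong) LP is the real gap.
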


\begin{theorem} \label{thm:global}
  Assuming a polynomial-time membership oracle for the class of balanced cycles,
  \problem admits an FPT algorithm with a running time of $O^*(4^k)$.
\end{theorem}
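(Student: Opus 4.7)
The plan is to reduce \problem to iterated calls of the local algorithm from Theorem~\ref{thm:local}, using the latter as a ``tile'' primitive. Pick any vertex $v_0$ lying on some unbalanced cycle of $G$ (if none exists, $\emptyset$ is a solution). In any optimal global solution $X$, either $v_0 \in X$, in which case we recurse on $(G - v_0, k-1)$; or $v_0$ lies in a balanced connected component $S$ of $G - X$ with $N_G(S) \subseteq X$, in which case we invoke the local algorithm on $(\Psi, v_0, k)$ to produce a local optimum $(S^*, X^*)$, commit $X^*$ to the global solution, and recurse on the instance with $S^* \cup X^*$ deleted and parameter $k - |X^*|$.

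The core of the argument is a tiling lemma asserting that we may choose $(S^*, X^*)$ to be ``furthest-reaching'' and commit it without loss of generality. Concretely, I would establish: there is an optimum local solution $(S^*, X^*)$ such that for every global optimum $X$ with $v_0 \notin X$ and $v_0$-component $S_X$ of $G - X$, the set $(X \setminus N_G(S_X)) \cup X^*$ is again a valid global solution of size at most $|X|$. The size bound follows from local optimality since $(S_X, N_G(S_X))$ is itself a feasible local solution, giving $|X^*| \leq |N_G(S_X)|$. Feasibility follows because no new unbalanced cycle can appear inside the enlarged region (it is contained in $S^*$, and $G[S^*]$ is balanced) nor cross its boundary (since the boundary $N_G(S^*) = X^*$ has been removed). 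The existence of a canonical furthest-reaching $S^*$ should emerge from a persistence/uncrossing argument on the half-integral local LP, analogous to the important-separator methodology in graph cuts.

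Granted this lemma, the global algorithm alternates between the branching step ($v_0 \in X$) and the commitment step (local-algorithm tile). Each branching step decreases $k$ by one, each commitment step deletes at least one vertex from the instance, so termination is immediate; the total running time results from combining the $O^*(2^k)$ cost of the local subroutine with the LP-branching recurrence over the global branching tree, and the stated $O^*(4^k)$ bound falls out of the standard calculation.

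The principal obstacle will be the tiling lemma, and in particular the construction of a canonical $S^*$. The difficulty compared to classical important separators is that balance of $G[S]$ is not union-closed: two balanced subgraphs sharing $v_0$ can combine via a theta configuration to create an unbalanced cycle. A naive uncrossing argument is therefore unavailable, and I would have to argue directly inside the local LP---exploiting half-integrality and persistence to pin down a maximal $1/0$-support structure, and then verifying by a cycle-crossing argument that substituting the corresponding $(S^*, X^*)$ for any weaker local choice preserves balance in the remainder of the graph.
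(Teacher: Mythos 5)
Your high-level architecture (branch on whether $v_0$ is deleted; otherwise tile via the local problem) matches the paper, but the crux of the argument --- the tiling lemma --- is both left unproved and stated in a form that is stronger than what the paper establishes and stronger than what your exchange argument supports. First, the feasibility step of your swap has a hole: after replacing $X$ by $(X \setminus N_G(S_X)) \cup X^*$, an unbalanced cycle avoiding the new solution must meet $N_G(S_X)$, and your argument only disposes of cycles inside $S^*$ or crossing $N_G(S^*)$. A vertex $u \in N_G(S_X)$ lying outside $N[S^*]$ could lie on an unbalanced cycle entirely disjoint from $N[S^*]$, so you implicitly need the domination property $N[S_X] \subseteq N[S^*]$ for \emph{every} optimal global $S_X$ simultaneously --- a single canonical furthest-reaching local optimum. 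The paper never proves such a statement. Its Lemma~\ref{lemma:persistence} is weaker and relative: for each candidate balanced set $S$ it builds $S^+ = N[S \cup V_R]$ and a balanced $S' \subseteq S^+$ with $N[S'] \subseteq S^+$ and $c(S^+\setminus S') \le c(N(S))$, which only shows that \emph{some} optimum contains $V_1$ and keeps $V_R$ attached to $v_0$. The algorithm then does \emph{not} commit a whole tile $X^*$ at once; it fixes $V_R$ to $0$ and $V_1$ to $1$ and continues LP-branching on the half-integral vertices, re-invoking persistence after each branch, until the tile closes up. You correctly flag the non-union-closedness of balance as the obstacle, but that is precisely the content you would need to supply (the paper's proof of Lemma~\ref{lemma:persistence} is a delicate cycle-reconfiguration plus dual-packing counting argument), so the proposal is incomplete at its central step.

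Second, the running-time claim does not ``fall out'' of your structure. If you invoke the $O^*(2^k)$ local algorithm as a black box at each ``keep $v_0$'' node of a binary tree whose root-to-leaf paths contain up to $k$ deletions and up to $k$ tile commitments, the naive bound is $\binom{2k}{k}\cdot 2^k = O^*(8^k)$, not $O^*(4^k)$. The paper obtains $4^k$ by folding everything into a \emph{single} branching tree of depth at most $2k$: every branching step (on an LP variable of the current local LP, or on a fresh root $v_0$) decreases the quantity ``remaining budget minus cumulative LP lower bound'' by at least $1/2$ in both children. To salvage your version you would either need to adopt this unified measure or argue that the tile sizes $k_1, k_2, \dots$ sum to at most $k$ and that the local costs compose multiplicatively as $\prod_i 2^{k_i} \le 2^k$ against the $2^k$ global branching --- which again requires interleaving the budget accounting rather than treating the local solver as an opaque subroutine.
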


To illustrate the flexibility of the notion, let us consider some classes of biased graphs.
\begin{itemize}
\item If $\cB=\emptyset$, then \problem corresponds simply to \textsc{Feedback Vertex Set}
\item If $\Psi$ arises as a $\Gamma$-labelled graph, then \problem corresponds to
  \textsc{Group Feedback Vertex Set}. If $\Gamma$ is finite, then the result is 
  equivalent to the previous LP-based algorithm~\cite{IwataWY16}.
\item If $\Psi$ is $\Gamma$-labelled for an infinite group $\Gamma$, e.g.,
  a matrix group, then previous results do not apply, since they assume the existence
  of an underlying VCSP presentation (in the correctness proofs, if not in the algorithms).
  However, the problem is still FPT, assuming, essentially, that the word problem
  for $\Gamma$ can be solved. 
\item To show a case that does not obviously correspond to group-labelled graphs,
  let $G$ be (improperly) edge-coloured, and let a cycle be balanced if and only if
  it is monochromatic. It is not difficult to see that this defines a biased graph.%
  \footnote{Yoichi Iwata (personal communication) noted that this does
    correspond to a group-labelled graph, using a quotient group of
    $Z_2^E$.}

\item Finally, Zaslavsky~\cite{Zaslavsky89Bias1} notes another case that in general 
  does not admit a group-labelled representation. Let $G$ be a copy of $C_n$,
  with two parallel copies of every edge. Let $\cB$ be a class of ``isolated''
  Hamiltonian cycles of $G$, in the sense that for any $C \in \cB$,
  switching the edge used between $u$ and $v$ for any pair of
  consecutive vertices of $G$ results in an unbalanced cycle. 
  It is not hard to verify that this defines a biased graph. 
  To keep $G$ as a simple graph, we simply subdivide the edges; this does not
  affect the collection of cycles (although it means that the term
  ``Hamiltonian'' fails to apply).
  
  However, the corresponding problem can be very difficult, e.g., 
  $\cB$ may consist of only exactly one of the $2^n$ candidate cycles,
  giving an oracle lower bound
  of $\Omega(2^k)$ against any algorithm for \problem.%
  \footnote{To be precise, it gives a lower bound of $2^k$ against the
    edge-deletion version of the problem, but it is not difficult
    to transfer such a lower bound to a lower bound against \problem.}
\end{itemize}
Finally, we show as promised that the algorithm for \localproblem
has independent applications. 
\begin{itemize}
\item Let $G=(V,E)$ be any graph, and add an apex vertex $v_0$ to $G$.
  Let $\cB=\emptyset$. Then \localproblem corresponds to \textsc{Vertex Cover},
  hence Theorem~\ref{thm:local} is an $O^*(4^k)$-time FPT algorithm for
  the problem \textsc{Vertex Cover Above Matching}, which 
  encompasses
  the more commonly known problem \textsc{Almost 2-SAT}~\cite{MishraRSSS11,RamanRS11}.
\item Let $G=(V,E)$ be a graph and $T \subseteq V$ a set of terminals. 
  Duplicate each terminal $t \in T$ into $d(t)$ copies, forming a set $T'$,
  and add a vertex $v_0$ to $G$ with $N(v_0)=T'$. Let a cycle be \emph{unbalanced}
  if and only if it passes through $v_0$ and two vertices $t, t' \in T'$ 
  which are copies of distinct terminals in $T$. Then \localproblem corresponds to 
  \textsc{Multiway Cut}, and Theorem~\ref{thm:local} reproduces the known
  2-approximation and $O^*(2^k)$-time algorithm~\cite{GargVY04,CyganPPW13MWC}.

  We note that the linear class condition prevents us from representing 
  any other cut problem this way; if terminal-terminal paths from $t$ to $t'$
  and from $t'$ to $t''$ are allowed in $G$, for some $t, t', t'' \in T$,
  then also the path from $t$ to $t''$ must be allowed. Hence the set of 
  allowed paths induces an equivalence relation on $T$. 
\end{itemize}

In general, we find that the notion of biased graphs is surprisingly subtle, 
and corresponds surprisingly well to the class of (natural) problems for which 
LP-branching FPT algorithms are known. We also note, without further study
at the moment, that 
there is significant similarity between the (local or global) problems that
can be expressed via biased graphs this way and the class of graph separation
problems for which the existence of \emph{polynomial kernels} is either
known, or most notoriously open~\cite{KratschW12FOCS}.

Finally, although our results do include some cases which were not
previously known to be FPT, 
we feel that the most significant advantage of the present work
is the transparency and naturalness of the definition. The existence of
a purely combinatorial condition also arguably brings us closer to the 
existence of a purely combinatorial algorithm for these problems. 
Since the results in this paper rely on using an
LP-solver with a separation oracle, 
a purely combinatorial algorithm could significantly decrease the hidden
polynomial factor in the running times.

\paragraph{Additional results: Approximation.}
As an additional result in the current paper (not present in the preliminary version~\cite{Wahlstrom14SODA}),
we consider the approximation properties of \problem. 
Using properties of the LP developed in the rest of the paper,
and using standard {\em region growing} arguments used for \textsc{Multicut} and \textsc{Sparsest Cut}~\cite{LeightonR99, GargVY96},
we show the following result. 

\begin{theorem} \label{thm:approx}
  Assuming a polynomial-time membership oracle for the class of balanced cycles,
  the weighted version of \problem admits a polynomial-time $O(\log k)$-approximation,
  where $k$ is the cost of an optimal solution. 
\end{theorem}

This problem does not appear to have been previously studied in the
approximation literature, even in the natural special case of
\textsc{Group Feedback Vertex Set}. 

We note that although ratios of $O(\log n)$ and $O(\log k)$ appear
similar from an approximation perspective, the latter are strongly
preferable for applications in parameterized complexity.
In particular, $f(k)$-type approximation ratios have been used
repeatedly in polynomial kernelization to bootstrap the kernelization
procedure, including for \textsc{Odd Cycle Transversal}~\cite{KratschW14TALG},
and for \textsc{Almost 2-SAT} and \textsc{Group Feedback Vertex Set}~\cite{KratschW12FOCS}.
In all of these cases, the size of the kernelization depends on the
approximation ratio when expressed as a function of $k$, and improved
ratios $f(k)$ would immediately improve the corresponding kernel
sizes.  Using Theorem~\ref{thm:approx} would improve the kernel size
computed in~\cite{KratschW12FOCS} for \textsc{Group Feedback Vertex Set},
from $O(k^{2s+2})$ vertices to $O((k \log k)^{s+1})$ vertices, where
$s$ is the size of the group. We do not pursue this connection further
in the present paper, because in preparation of a journal version
of~\cite{KratschW12FOCS} (in preparation, 2019), 
Kratsch and Wahlstr\"om already noted an $O(s \log (sk) \log \log
(sk))$-approximation by reducing to \textsc{Symmetric Multicut}
and using results by Even et al.~\cite{EvenNRS00JACM}.
Thus, for $s=O(1)$, the difference in approximation ratio is minor.
However, for $s=\omega(1)$, Theorem~\ref{thm:approx} provides a
significant improvement in the available approximation guarantee.

For more on kernelization, see the book of Fomin,
Lokshtanov, Saurabh and Zehavi~\cite{FominLSZkernelsbook}. 
We will not consider this topic any further in the present paper.

\paragraph{Preliminaries.}
We assume familiarity with the basic notions of graph theory,
parameterized complexity, and the basics of combinatorial optimisation.
For a reference on parameterized complexity, see Cygan et al.~\cite{CyganFKLMPPS15PCbook};
for all necessary material on linear programming and combinatorial
optimisation, see Schrijver~\cite{SchrijverBook}.
Other notions will be introduced as they are used.

\section{Biased graphs and the local LP}

In this section, we define the local LP, used to solve the local problem,
and give some results about the structure of min-weight obstacles in it. 
We will also show that we can optimise over the LP in polynomial time
by providing a separation oracle.  In subsequent sections, we will
derive the properties of half-integrality and persistence, and
show Theorems~\ref{thm:local} and~\ref{thm:global}.

We first introduce some additional terminology. 
Recall the definition of biased graphs from Section~1. 
Let $\Psi=(G=(V,E), \cB)$ be a biased graph. For a
simple cycle $C$, a \emph{chord path} for $C$ is a simple path with end
vertices in $C$ and internal vertices and edges disjoint from $C$. If $C$
is an unbalanced cycle in $\Psi$, a \emph{reconfiguration of $C$ by $P$}
refers to an unbalanced cycle $C'$ formed from $C$ and $P$, with $C'$
containing $P$, as is guaranteed by the definition of biased graphs. Note
that a chord path can consist of a single edge and no internal vertices;
however, it is also possible that a chord path is non-induced, e.g., for
structural purposes we may reconfigure a cycle $C$ using a chord path $P$
that contains internal vertices, even if there is a direct edge in $G$
connecting the end points of $P$. A \emph{membership oracle} for $\cB$
is a (black-box) algorithm which, for every set of edges forming a simple
cycle $C$ in $G$, will respond whether $C \in \cB$. 

For simplicity, we assume that $G$ is a simple graph, by subdividing
edges if $G$ contains parallel edges. Let us observe that this can be
done without loss of generality.

\begin{proposition} \label{prop:simplegraph}
  Let $\Psi=(G, \cB)$ be a biased graph and let $G'$ be the result of subdividing some edges in $G$. 
  Then there is a bijection between simple cycles of $G$ and simple cycles of $G'$. Furthermore,
  we can define a biased graph $\Psi'=(G', \cB')$ where $\cB'$ contains edge sets of simple cycles in $G'$ 
  which correspond to balanced cycles in $G$; and given a membership oracle for $\cB$ we 
  can define a membership oracle for $\cB'$.
\end{proposition}

As a warm-up, and to illustrate the kind of arguments we will be using, we
give a simple lemma that shows why co-linearity is a useful structural
property from the perspective of the local problem.

\begin{lemma} \label{lemma:cycleintersect}
  Let $\Psi=(G=(V,E),\cB)$ be a biased graph, and let $V_R \subseteq V$
  be a set of vertices such that $G[V_R]$ is balanced and connected.
  Let $C$ be an unbalanced cycle. Then either $C$ intersects $N(V_R)$ in at most one vertex, 
  or there exists an unbalanced cycle $C'$ such that $C'$ intersects
  $V_R$ in a non-empty simple path, $C'$ intersects $N(V_R)$ in at most two vertices, 
  and $(V(C') \setminus V_R) \subseteq (V(C) \setminus V_R)$.
\end{lemma}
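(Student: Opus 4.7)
My plan is a minimal counterexample argument. If $|V(C) \cap N(V_R)| \le 1$ the first alternative of the conclusion is immediate, so throughout I assume $|V(C) \cap N(V_R)| \ge 2$. The strategy is to first exhibit some unbalanced cycle with outside-vertices contained in $V(C) \setminus V_R$ and meeting $V_R$, and then to take $C'$ to minimise $|V(C') \setminus V_R|$ over all such cycles; the minimality will then force the required structure.

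To produce a seed cycle, if $V(C) \cap V_R \neq \emptyset$ take it to be $C$ itself; otherwise pick distinct $u, v \in V(C) \cap N(V_R)$, pick neighbours $u', v' \in V_R$, and use the connectivity of $G[V_R]$ to form a path $P'$ from $u'$ to $v'$ inside $V_R$. The walk $u$--$u'$--$P'$--$v'$--$v$ is then a chord path of $C$ whose interior lies in $V_R$ and hence is disjoint from $V(C)$, so by the reconfiguration axiom one of the two resulting cycles is unbalanced; both meet $V_R$ via $P'$ and have outside-vertices contained in $V(C) \setminus V_R$. Now let $C'$ minimise $|V(C') \setminus V_R|$ over unbalanced cycles with $V(C') \setminus V_R \subseteq V(C) \setminus V_R$ and $V(C') \cap V_R \neq \emptyset$.

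First I would show that $V(C') \cap V_R$ forms a single maximal subpath of $C'$. Suppose otherwise, so the maximal $V_R$-subpaths $A_1, \ldots, A_k$ (with $k \ge 2$) of $C'$ are separated by $V \setminus V_R$-arcs $R_1, \ldots, R_k$. Any path in $V_R$ from $V(A_i)$ to $V(A_j)$ for $i \ne j$ can be chopped at its crossings of $V(C') \cap V_R = \bigcup_\ell V(A_\ell)$ to extract a subsegment $P''$ whose endpoints lie in two different $A_\ell$'s and whose interior lies in $V_R \setminus V(C')$; such $P''$ is a chord path of $C'$. The two candidates produced by reconfiguration each use one of the two $C'$-arcs between the endpoints of $P''$, and since those endpoints lie in different $A_\ell$'s each arc contains at least one full $R_\ell$. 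Thus both candidates have strictly fewer outside vertices than $C'$ while still meeting $V_R$, and the unbalanced one contradicts minimality. Next, with $V(C') \cap V_R = V(Q)$ a single subpath, $C'$ decomposes as $Q$ together with a single $V \setminus V_R$-arc $R$ with endpoints $x, y \in N(V_R)$. If an interior vertex $z$ of $R$ were in $N(V_R)$ with neighbour $z' \in V_R$, then either $z' \in V(Q)$, so $zz'$ itself is a chord path, or $z' \in V_R \setminus V(C')$, in which case a shortest path in $V_R$ from $z'$ to $V(Q)$ has interior disjoint from $V(C')$ and yields a chord path $z$--$z'$--$\cdots$--$w$ for some $w \in V(Q)$. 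Reconfiguration in either case produces two candidates using proper subpaths of $R$ on the $V \setminus V_R$-side, so each has strictly fewer outside vertices than $C'$, and the unbalanced one again contradicts minimality.

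The main obstacle will be the multiple-components step. The subtlety is that the reconfiguration axiom guarantees only that one of the two new cycles is unbalanced without specifying which, so the strict decrease in $|V(C') \setminus V_R|$ must hold for both candidates simultaneously; this is what forces the chord path to be anchored in two different $A_\ell$'s, so that each of the two $C'$-arcs carries a complete $R_\ell$-worth of outside vertices that the corresponding new cycle discards.
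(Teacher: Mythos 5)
Your proof is correct and uses essentially the same idea as the paper: reconfigure via chord paths with interior in $V_R$ until the cycle meets $V_R$ in a single path with only two boundary contacts. The only cosmetic difference is that you package the descent as a minimal-counterexample argument with measure $|V(C')\setminus V_R|$, whereas the paper descends iteratively using the measure $|V(C)\cap N(V_R)|$; both potentials decrease under the same reconfiguration moves, and your explicit check that \emph{both} reconfiguration candidates strictly decrease the measure makes the appeal to the biased-graph axiom cleaner than the paper's more terse version.
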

\begin{proof}
  Assume that $C$ intersects $N(V_R)$ at least twice, as otherwise there is nothing to show.
  Assume first that $C$ is disjoint from $V_R$, and let $u$ and $v$ be two
  distinct members of $V(C) \cap N(V_R)$. Let $P$ be a $uv$-path with internal
  vertices in $V_R$. Reconfiguring $C$ using $P$ as a chord path results in an unbalanced cycle $C'$ 
  which intersects $V_R$, and whose intersection with $V \setminus V_R$ is 
  a subset of that of $C$, hence we may assume that the cycle $C$ intersects $V_R$.

  Now consider $V(C) \cap V_R$. Since $V_R$ is balanced, this is a collection
  of paths (rather than the entire cycle). Let $P_C$ be one such path, and
  let $u$ and $w$ be the vertices in $N(V_R)$ that it terminates at. Now, if possible,
  let $P$ be a shortest path in $V_R$ connecting $P_C$ to a vertex of
  either $(N(V_R) \cap V(C)) \setminus \{u,w\}$ or $(V(C) \setminus V(P_C)) \cap V_R$.
  In both cases, using $P$ as a chord path results in a new
  unbalanced cycle $C'$ whose intersection with $N(V_R)$ is strictly
  decreased.

  The only remaining case is that $V(C) \cap V_R$ forms a single path, and 
  $V(C) \cap N(V_R)$ consists of exactly two vertices (necessarily the
  attachment points of the path), and we are done. 
\maybeqed
\end{proof}

Properties and arguments similar to this will be used extensively in the
arguments concerning the behaviour of the LP.

\subsection{The LP relaxation}

We now define the LP-relaxation used for the \localproblem. 
The LP uses constraints we refer to as \emph{balloons};
we will see that balloons can equivalently be thought of as
pairs of paths rooted in $v_0$, or as a cycle connected to $v_0$ by a path.

Let $\Psi=(G=(V,E), \cB)$ be a biased graph and $v_0 \in V$ a distinguished vertex. 
Let $\cC$ be the corresponding class of unbalanced simple cycles. 
The \localproblem problem asks for a set $S \subseteq V$ 
with $v_0 \in S$ such that $G[S]$ is balanced and connected, and $|N(S)|$ is minimum. 
We consider the following LP-relaxation for it. The variables are $\{x_v: v \in V\}$,
with $0 \leq x_v \leq 1$ and $x_{v_0}=0$. For $C \in \cC$, a \emph{$v_0$-$C$-path} 
is a simple path $P=v_0 \ldots v_\ell$ where $v_\ell \in C$ and $v_i \notin C$ for 
$1 \leq i < \ell$. If $v_0 \in C$, then $P$ consists of the single vertex $v_0$ and no edges.
We define the weights of $C$ and $P$ as
$w(C)=\sum_{v \in C} x_v$ and $w(P)=\sum_{v \in P-v_\ell} x_v + \frac 1 2 x_{v_\ell}$,
i.e., $w(P)$ assigns coefficient $\frac 1 2$ to the endpoint of $P$ and $1$ to 
the internal vertices of $P$. A \emph{($v_0$-)balloon} is a pair $B=(P,C)$ where $C \in \cC$
and $P$ is a $v_0$-$C$-path; the \emph{weight} of $B$ is $w(B)=2w(P)+w(C)$. 
We call the endpoint $v_\ell$ of $P$ the \emph{knot vertex} of $B$. 
We let $V(B)$ (respectively $V(C)$, $V(P)$) denote the set of vertices occurring in $B$
(respectively in $C$, in $P$); hence $V(C) \cap V(P) = \{v_\ell\}$ is the knot vertex. 
The \emph{edges used in $B$}, $E(B)$, is the set of edges required for $B$
to be a balloon, i.e., the edges $v_iv_{i+1}$ of $P$ and the edges of $C$.
Note that this does not necessarily include all edges of $G[V(B)]$.
The \emph{edges used in $P$} (in $C$) is defined correspondingly.

Having fixed $v_0$ and $\cC$ as above, we define a polytope $\cP$ by constraints
\begin{equation}
  \label{eq:balloonconstraint}
  w(B) \geq 1 \textrm{ for every $v_0$-balloon $B=(P,C)$}.
\end{equation}
with $x_v \geq 0$ for every vertex $v$ and $x_{v_0}=0$.
We refer to this as the \emph{local LP}.

Given an optimisation goal $\min c^T x$ for the above LP, 
the dual of the system asks to pack balloons, at weight $1$ for every
balloon, subject to every vertex $v$ having a capacity $c_v$ and a balloon
$B=(P,C)$ using capacity from $v$ in proportion to the coefficient of $v$
in $w(B)$ (which is 2 if $v \in V(P)$, and $1$ otherwise).

\subsection{Balloons and path pairs}

We now give some observations that will simplify the future arguments
regarding the local LP, and in particular will allow us to change
perspectives between viewing the constraints as pairs of paths, or as
rooted unbalanced cycles.

Let $x: V \to [0,1]$ be a fractional assignment. We first observe that our
weights $w(P)$ and $w(C)$ for balloons $B=(P,C)$ can be recast as
\emph{edge weights}: For any edge $uv \in E$, we let the length of $uv$ 
under $x$ be $\ell_x(uv)=(x_u+x_v)/2$. 
For a path $P$, we let $\ell_x(P)=\sum_{uv \in E(P)} \ell_x(uv)$
be the length of $P$ under this metric, and similarly for simple cycles.
We also define $z_x(v)=\min_P \ell_x(P)$ ranging over all $v_0$-$v$-paths
$P$ as the \emph{distance to $v$ (under $x$, from $v_0$)}. This metric 
agrees well with the notion of the weight of a balloon that we use in the LP, 
as we will see. Note that the end points of a path $P$ contribute only half
their weight to the length $\ell_x(P)$ of a path, as in $w(P)$.

Now observe that for any balloon $B=(P,C)$, and any vertex $v \in C$
other than the knot vertex, it is possible to form two paths 
$P_1, P_2$ from $v_0$ to $v$, such that $P_1 \cup P_2$ covers $B$,
and such that the weight of the balloon equals $\ell_x(P_1)+\ell_x(P_2)$.
Indeed, this sum gives coefficient $2$ to every vertex of $P$, and
coefficient $1$ to every vertex of $V(C) \setminus V(P)$, including
the vertex $v$. We refer to this as a path decomposition of $B$. 
We also observe two alternative decompositions. 

\begin{lemma} \label{lemma:decompositions}
  Let $B=(P,C)$ be a $v_0$-balloon. Each of the following is an equivalent
  decomposition of the weight of $B$.
  \begin{enumerate}
  \item $w(B) \geq \ell_x(C) + \min_{v \in V(C)} 2z_x(v)$,
    with equality achieved if $B$ is a min-weight $v_0$-balloon.
  \item $w(B)=\ell_x(P_1)+\ell_x(P_2)$, for any decomposition of $B$
    into two paths $P_1, P_2$ ending at a non-knot vertex $v \in V(C)$.
  \item  \label{decompose:uvpaths}
    $w(B)=\ell_x(P_u)+\ell_x(P_v)+\ell_x(uv)$ % = \sum_{w \in P_u}  x_w + \sum_{w \in P_v} x_w$,
    for any decomposition
    of $B$ into one path $P_u$ to $u \in V(C)$, one path $P_v$ to $v \in V(C)$,
    and an edge $uv \in E(C)$, where $u, v \in V(C)$ are non-knot vertices.
    Equivalently, $w(B)=\sum_{w \in P_u} x_w + \sum_{w \in P_v} x_w$. 
  \end{enumerate}
\end{lemma}

Note that the first decomposition here implies that the knot vertex $v$
of a min-weight balloon $B$ will be chosen for minimum $z_x$-value.

In general, we view constraints as pairs of paths when deriving simple
properties of the LP, but will need to revert to the view of biased graphs
when arguing persistence in the next section.

\subsection{Structure of min-weight balloons}
\label{sec:minwtballoon}
We now work closer towards a separation oracle, by showing properties of
balloons $B$ minimising $w(B)$ under an assignment $x$, as this will help us
finding the \emph{most violated constraint} of an instance of the local LP.

\begin{figure}\centering
  \begin{subfigure}[b]{0.25\textwidth}\centering
    \begin{tikzpicture}
      \node[circle,draw,thick,dashed, minimum width=2 cm] (c) at (0,3) {};
      \node (v0) at (0,0) {$\bullet$}; \node[anchor=west] at (v0) {$v_0$};
      \node (vk) at (0,2) {$\bullet$}; \node[anchor=north west] at (vk) {$v_k$};
      \node (u) at (0,1) {$\bullet$}; \node[anchor=west] at (u) {$u$};
      \node (v) at (-1,3) {$\bullet$}; \node[anchor=east] at (v) {$v$};
      
      \path[draw,thick] (v0) -- (vk) arc (270:180:1cm);
      \path[draw,thick] (u.center) edge[bend left] node [left] {} (v.center);
    \end{tikzpicture}
    \caption{$u$ on $P$, case 1}
  \end{subfigure}%
  \begin{subfigure}[b]{0.25\textwidth}\centering
    \begin{tikzpicture}
      \node[circle,draw,thick,dashed, minimum width=2 cm] (c) at (0,3) {};
      \node (v0) at (0,0) {$\bullet$}; \node[anchor=west] at (v0) {$v_0$};
      \node (vk) at (0,2) {$\bullet$}; \node[anchor=north west] at (vk) {$v_k$};
      \node (u) at (0,1) {$\bullet$}; \node[anchor=west] at (u) {$u$};
      \node (v) at (-1,3) {$\bullet$}; \node[anchor=east] at (v) {$v$};
      
      \path[draw,thick] (v0) -- (vk) arc (270:540:1cm);
      \path[draw,thick] (u.center) edge[bend left] node [left] {} (v.center);
    \end{tikzpicture}
    \caption{$u$ on $P$, case 2}
  \end{subfigure}%
  \begin{subfigure}[b]{0.25\textwidth}    \centering
    \begin{tikzpicture}
      \node[circle,draw,thick,dashed, minimum width=2 cm] (c) at (0,2) {};
      \node (v0) at (0,0) {$\bullet$}; \node[anchor=west] at (v0) {$v_0$};
      \node (vk) at (0,1) {$\bullet$}; \node[anchor=north west] at (vk) {$v_k$};
      \node (u) at (-1,2) {$\bullet$}; \node[anchor=east] at (u) {$u$};
      \node (v) at (0,3) {$\bullet$}; \node[anchor=south] at (v) {$v$};

      % \path[draw,thick] (u) arc (270:0:1cm);
      \path[draw,thick] (u.center) -- (v.center);
      \path[draw,thick] (v0) -- (vk) arc (270:180:1cm);
      \path[draw,thick] (vk) arc (270:450:1cm);
    \end{tikzpicture}
    \caption{$u$ on $C \setminus P$, case 1}
  \end{subfigure}%
  \begin{subfigure}[b]{0.25\textwidth}    \centering
    \begin{tikzpicture}
      \node[circle,draw,thick,dashed, minimum width=2 cm] (c) at (0,2) {};
      \node (v0) at (0,0) {$\bullet$}; \node[anchor=west] at (v0) {$v_0$};
      \node (vk) at (0,1) {$\bullet$}; \node[anchor=north west] at (vk) {$v_k$};
      \node (u) at (-1,2) {$\bullet$}; \node[anchor=east] at (u) {$u$};
      \node (v) at (0,3) {$\bullet$}; \node[anchor=south] at (v) {$v$};
      
      % \path[draw,thick] (u) arc (270:0:1cm);
      \path[draw,thick] (u.center) -- (v.center);
      \path[draw,thick] (v0) -- (vk) arc (270:90:1cm);
    \end{tikzpicture}
    \caption{$u$ on $C \setminus P$, case 2}
  \end{subfigure}  
  \caption{Reconfigurations in Lemma~\ref{lemma:balloon-shortest-paths}. Note that in every case, the new balloon~$B'$
    decomposes into the new shortest path $P_v$ and a pre-existing
    path from $v_0$ to $v$ in~$B$.}
  \label{fig:shortest}
\end{figure}

Our first lemma is a structural result that will be independently useful
in the next section. 

\begin{lemma} \label{lemma:balloon-shortest-paths}
  Let $B=(P,C)$ be a min-weight balloon with respect to an assignment
  $x: V \to [0,1]$. Then for every vertex $v$ in $B$,
  $B$ contains a shortest $v_0$-$v$-path under the metric $\ell_x$. 
\end{lemma}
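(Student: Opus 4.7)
The plan is to prove the lemma by contradiction: if some $v \in V(B)$ has no shortest $v_0$-$v$-path contained in $B$, I construct a balloon $B'$ with $w(B') < w(B)$, contradicting the minimality of $B$. All cases rely on the shortest-path metric $z_x$, the triangle inequality along paths, and the defining property of biased graphs that reconfiguring an unbalanced cycle along a chord path yields at least one unbalanced cycle.

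First, the easy vertices. For the knot vertex $v_\ell$, the first decomposition of Lemma~\ref{lemma:decompositions} gives $w(B) = \ell_x(C) + 2z_x(v_\ell)$ when $B$ is min-weight, which forces $\ell_x(P) = z_x(v_\ell)$; in particular $P$ itself is a shortest $v_0$-$v_\ell$-path. For an intermediate stem vertex $v \in V(P) \setminus \{v_0, v_\ell\}$, additivity of $\ell_x$ along $P$ gives $\ell_x(P_{v_0 \to v}) + \ell_x(P_{v \to v_\ell}) = z_x(v_\ell)$, while the triangle inequality (shortcutting a concatenation of a shortest $v_0$-$v$-path with $P_{v \to v_\ell}$) gives $\ell_x(P_{v \to v_\ell}) \geq z_x(v_\ell) - z_x(v)$; combining these two yields $\ell_x(P_{v_0\to v}) \leq z_x(v)$, and the reverse inequality is trivial.

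The heart of the argument is the case $v \in V(C) \setminus \{v_\ell\}$. Write $C = C_1 \cup C_2$ for the two arcs from $v_\ell$ to $v$ and let $P_i = P \cup C_i$. The assumption that $B$ contains no shortest $v_0$-$v$-path forces $\min_i \ell_x(P_i) > z_x(v)$, equivalently $\ell_x(C_i) > z_x(v) - z_x(v_\ell)$ for \emph{both} $i$. Let $P^*$ be a shortest $v_0$-$v$-path, let $u$ be the last vertex of $P^*$ (read from $v_0$) lying in $V(B) \setminus \{v\}$ (with the convention $u = v_0$ if none), and let $Y = P^*_{u \to v}$. Then $Y$ is internally disjoint from $V(B)$, and $\ell_x(Y) = z_x(v) - z_x(u)$ since prefixes of shortest paths are shortest. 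I then case-split on the location of $u$.
\begin{itemize}
\item If $u = v_0$, then $P^{-1} + P^*$ is a $v_\ell$-$v$-chord path for $C$ through $v_0$; the reconfigured unbalanced cycle $C'$ contains $v_0$, and the trivial balloon $(\{v_0\}, C')$ has weight $z_x(v) + z_x(v_\ell) + \ell_x(C_i)$ for one of $i \in \{1,2\}$, strictly less than $w(B)$.
\item If $u = v_\ell$, then $Y$ is itself a $v_\ell$-$v$-chord path; reconfiguring and pairing with the original stem $P$ gives a balloon of weight $2z_x(v_\ell) + \ell_x(Y) + \ell_x(C_i) < w(B)$.
\item If $u \in V(P) \setminus \{v_0, v_\ell\}$, I form the $v_\ell$-$v$-chord path $P_{v_\ell \to u} + Y$, reconfigure $C$, and use $P_{v_0 \to u}$ (a shortest $v_0$-$u$-path by the stem case above) as the new stem; the weight decrease telescopes to $\ell_x(C_i) - (z_x(v) - z_x(v_\ell)) > 0$.
\item If $u \in V(C) \setminus \{v_\ell\}$, then $Y$ is a $u$-$v$-chord path. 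If the reconfigured unbalanced cycle still contains $v_\ell$, the original stem $P$ suffices; otherwise, the new cycle avoids $v_\ell$, and I take as new stem either $P$ extended by the $v_\ell$-to-$u$ arc of $C$ or the shortest prefix of $P^*$ up to its first hit of the new cycle, whichever yields a valid balloon. Using $\ell_x(Y) = z_x(v) - z_x(u)$ together with triangle-inequality bounds of the form $\ell_x(C_{\text{arc}}) \geq z_x(u) - z_x(v_\ell)$, the strict hypothesis $\ell_x(C_i) > z_x(v) - z_x(v_\ell)$ again forces $w(B') < w(B)$.
\end{itemize}

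The main obstacle will be the last sub-case, in which $u$ is a non-knot cycle vertex and the unbalanced reconfigured cycle avoids $v_\ell$: the original stem $P$ cannot be reused, and I must build a new stem while simultaneously verifying simplicity, internal disjointness from the new cycle, and a strict weight decrease. The other sub-cases are essentially a uniform exchange calculation driven by the biased-graph reconfiguration axiom, but this one requires the most careful bookkeeping.
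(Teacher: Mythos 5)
Your overall strategy is the same as the paper's: assume some $v\in V(C)$ violates the claim, take a shortest $v_0$-$v$-path, identify the point $u$ where it departs from $B$, use the remaining segment to build a chord path, reconfigure $C$, and show the resulting balloon is strictly cheaper by comparing path decompositions. Your treatment of the stem vertices and of the sub-cases $u=v_0$, $u=v_\ell$, and $u\in V(P)\setminus\{v_0,v_\ell\}$ is sound and matches the paper's Case~1 in substance.

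The gap is exactly where you flag it: $u\in V(C)\setminus\{v_\ell\}$ with the reconfigured cycle $C'=Y+A$ excluding $v_\ell$ (here $A$ is the $u$-$v$ arc of $C$ avoiding $v_\ell$). The ingredients you list do not close it. Writing $A'=A'_u+A'_v$ for the other arc (split at $v_\ell$), the natural new balloon has stem $P+A'_u$ and decomposes as $(P+A'_u+A)+(P+A'_u+Y)$; the first summand is one of $B$'s original decomposition paths, so you need $\ell_x(P)+\ell_x(A'_u)+\ell_x(Y)<\ell_x(P)+\ell_x(A')$, i.e.\ $z_x(v)-z_x(u)<\ell_x(A'_v)$. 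Your triangle-inequality bound $\ell_x(A'_u)\ge z_x(u)-z_x(v_\ell)$ points the wrong way; what you actually need is $z_x(u)\ge z_x(v_\ell)$, which holds because decomposition~1 of Lemma~\ref{lemma:decompositions} forces the knot of a \emph{min-weight} balloon to minimise $z_x$ over $V(C)$ --- an observation you never invoke. With that added, your argument goes through. The paper sidesteps this bookkeeping entirely by a different setup: it picks the violating vertex $v$ of \emph{minimum} $z_x$-value and reroutes the shortest path $P_v$ so that its prefix up to the departure point $u$ runs along $B$ (justified by minimality of $v$). Then the new stem is that prefix, one decomposition path of $B'$ is literally $P_v$ (of length $z_x(v)$) and the other is one of $B$'s original decomposition paths, so $w(B')<w(B)$ is immediate in every sub-case, with no arc-length estimates needed.
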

\begin{proof}
  Clearly, $P$ must be a shortest path, by decomposition 1 of
  Lemma~\ref{lemma:decompositions}, hence the claim holds for every
  vertex of $P$. 
  Let $v \in C \setminus P$, and let $P_v$ be a shortest $v_0$-$v$-path such that
  both paths to $v$ along $B$ (as in decomposition 2) are longer than $P_v$;
  let $v$ be chosen with minimum $z_x(v)$-value, subject to these conditions.
  Choose $P_v$ to maximize the length of the prefix of $P_v$ that follows $B$;
  in particular, $P_v \cap P$ is a common prefix of $P$ and $P_v$.
  We may also assume that $P_v$ ``rejoins'' $B$
  only exactly once, i.e., after $P_v$ has followed its first edge not
  present in $B$, then no further vertex of $P_v$ other than $v$
  is contained in $B$.  Indeed, let $w$ be the first vertex of $B$ that $P_v$ 
  encounters after having departed from $B$.  Then by assumption the prefix 
  of $P_v$ up to $w$ is shorter than both paths to $w$ along $B$, 
  and instead of $(v, P_v)$ above we may choose $(w, P_w)$ where $P_w$ is 
  the aforementioned prefix of $P_v$. Note that $P_w$ now has all the
  properties assumed of $P_v$.
  Let $u$ denote the departure point of $P_v$ from $B$
  (i.e., $u$ is the last vertex of $P_v$ such that the prefix of $P_v$ 
  from $v_0$ to $u$ follows edges of $B$). Let $v_k$ be the knot vertex of $B$.
  We split into two cases. See Figure~\ref{fig:shortest} for an
  illustration. 

  If $u$ lies in $P$, then we reason as follows. 
  The two paths from $u$ to $v_k$ (in $P$) respectively $v$ (in $P_v$) 
  form a chord path for $C$. Let $C'$ be a reconfiguration of $C$ by
  this path, and use $P'=P \cap P_v$ as path to attach $C'$ to $v_0$,
  defining a balloon $B'=(P',C')$. Decompose $B$ as $P_1+P_2$, where 
  $P_1$ and $P_2$ both end in $v$, and similarly decompose $B'$
  as $P_1'+P_2'$, ending in $v$. Observe that since $u$ is the new
  knot vertex, for both possible choices of $C'$ it holds that
  one of the paths $P_i'$ will be identical to $P_v$,
  while the other will be either $P_1$ or $P_2$.
  Since $z(P_v) < z(P_1), z(P_2)$, the new balloon $B'$ represents
  a constraint with a smaller value than $B$.

  Otherwise, the departure point $u$ lies in $C \setminus P$. 
  The suffix of $P_v$ from $u$ to $v$ forms a chord path; reconfiguring
  by this chord path leaves two options for the new cycle $C'$, 
  with $C'$ either including $v_k$ or excluding $v_k$. 
  If $C'$ includes $v_k$, then we may
  choose $v_k$ as our new knot vertex. As a result, we may use the
  same argument as in the previous paragraph, noting that one of the 
  two paths in the decomposition of $B'$ was also present in the 
  decomposition of $B$, whereas the other path is shorter than both
  previous paths. Otherwise, finally, the knot vertex will be $u$,
  and again one of the two paths in the decomposition of $B'$
  is present in the decomposition in $B$, while the other is the
  new shortest path to $v$. 
\maybeqed
\end{proof}

We observe a particular consequence useful for finding min-weight balloons. 

\begin{corollary} \label{cor:balloons-twoshortest}
  Any minimum-weight balloon constraint $B=(P,C)$ can be decomposed as 
  $w(B)=z_x(u)+z_x(v)+(x(u)+x(v))/2$, for some $u, v \in V(C)$ with
  $uv \in E$. 
\end{corollary}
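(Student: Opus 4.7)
The plan is to apply Lemma~\ref{lemma:balloon-shortest-paths} to a min-weight balloon $B=(P,C)$ and then select the edge $uv$ for decomposition~3 of Lemma~\ref{lemma:decompositions} so that both paths produced are genuine shortest $v_0$-paths. Fix the knot $v_k$ and enumerate $C$ cyclically as $v_k=w_0,w_1,\ldots,w_{m-1}$; for each $j\in\{1,\ldots,m-1\}$ let $P_j^+$ and $P_j^-$ be the two $v_0$-$w_j$-paths in $B$ obtained by following $P$ and then tracing $C$ in either direction, with lengths $f(j)=\ell_x(P_j^+)$ and $g(j)=\ell_x(P_j^-)$. Decomposition~2 of Lemma~\ref{lemma:decompositions} gives $f(j)+g(j)=w(B)$ for every $j$, so $f$ is non-decreasing and $g$ non-increasing in $j$, while Lemma~\ref{lemma:balloon-shortest-paths} gives $z_x(w_j)=\min(f(j),g(j))$.

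In the main case, the sign of $f-g$ flips in the interior of the cycle: there exists $j^*\in\{1,\ldots,m-2\}$ with $f(j^*)\le g(j^*)$ and $f(j^*+1)>g(j^*+1)$. Setting $u=w_{j^*}$, $v=w_{j^*+1}$ and invoking decomposition~3 of Lemma~\ref{lemma:decompositions} with the non-knot edge $uv\in E(C)$, the two in-balloon paths are $P_u=P_{j^*}^+$ and $P_v=P_{j^*+1}^-$, both of which are shortest $v_0$-paths by the choice of $j^*$. Since $\ell_x(uv)=(x_u+x_v)/2$, this immediately rewrites as $w(B)=z_x(u)+z_x(v)+(x_u+x_v)/2$, as claimed.

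The main obstacle is the degenerate case in which $f(j)\le g(j)$ holds for every $j$ (the symmetric case is handled identically), since then no non-knot edge of $C$ witnesses a direction-change and decomposition~3 as stated does not produce two shortest paths. To handle it I take $u=v_k$ and $v=w_{m-1}$, with $uv$ the corresponding cycle-edge incident to the knot, and set $P_u=P$ and $P_v=P_{m-1}^+$. By item~1 of Lemma~\ref{lemma:decompositions}, the knot of a min-weight balloon satisfies $\ell_x(P)=z_x(v_k)$, so $P_u$ is shortest; and $P_v$ is shortest by the case assumption. A direct verification recovers $\ell_x(P_u)+\ell_x(P_v)+\ell_x(v_k w_{m-1})=2\ell_x(P)+\ell_x(C)=w(B)$, which simply extends decomposition~3 to a cycle edge incident to the knot. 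The real subtlety is spotting the need for this extension and observing that item~1 of Lemma~\ref{lemma:decompositions} is exactly the ingredient that makes $P$ itself shortest, so that the extended decomposition still matches the corollary's shortest-path identity.
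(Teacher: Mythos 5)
Your proof is correct and follows essentially the same route as the paper: both rest on Lemma~\ref{lemma:balloon-shortest-paths} and locate the cycle edge at which the two in-balloon shortest-path directions meet (the paper phrases this via the set of maximum-$z_x$ vertices of $C$, you via the sign change of the monotone difference $f-g$). Your explicit treatment of the degenerate case where that edge is incident to the knot --- extending decomposition~3 and using item~1 of Lemma~\ref{lemma:decompositions} to see that $P$ itself is a shortest path --- is, if anything, more careful than the paper's own case analysis.
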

\begin{proof}  
  Let $B=(P,C)$ be a minimum-weight balloon. Recall that $|V(C)|\geq 3$
  since the input graph is simple (by Prop.~\ref{prop:simplegraph}).
  If there is a vertex 
  $u \in V(C) \setminus V(P)$ such that both paths to $u$ in $B$ 
  are shortest paths under the metric $\ell_x$, then the result follows.
  Indeed, decompose $B$ into two paths $P_1$, $P_2$ ending at $u$,
  and truncate $P_2$ by one edge to end at some vertex $v \neq u$,
  where $v$ is chosen not to be the knot vertex.
  Then this creates shortest paths to $u$ and to $v$, 
  and by Lemma~\ref{lemma:decompositions}, decomposition~\ref{decompose:uvpaths}
  these paths and the edge $uv$ form a decomposition of $B$.
  
  Otherwise, every vertex in $V(C) \setminus V(P)$ has exactly one
  shortest path in $B$ under $\ell_x$, by Lemma~\ref{lemma:balloon-shortest-paths}.
  There is then (precisely) one edge $uv \in E(C)$ that is not used by any 
  shortest path in $B$, and neither of $u$ or $v$ is the knot vertex. 
  In this scenario, again, the shortest paths in $B$ to $u$ and to $v$
  and the edge $uv$
  form a decomposition of $B$ as in Lemma~\ref{lemma:decompositions}, decomposition~\ref{decompose:uvpaths}
  and we are done. 
%  , and let $u$ be a vertex of $B$
%  with maximum $z_x(u)$-value. We can select $u$ such that
%  $u \in V(C)$ and $u$ is not the knot vertex. Hence we can decompose $B$
%  into two paths $P_1$, $P_2$, where by Lemma~\ref{lemma:balloon-shortest-paths}
%  at least one of these paths, say $P_1$, is a shortest path to $u$.
%  Let $v$ be the neighbour of $u$ in $C$ along the edge not used by $P_1$.
%  If we can select $u$ so that $P_2$, truncated to end at $v$,
%  is a shortest path to $v$, then the result will follow.
%  We show that this indeed holds.
%
%  Let $U$ be the set of vertices $u$ of $C$ with maximum $z_x(u)$-value.
%  Note first that the conclusion is trivial if $|U|=1$,
%  or if there is a vertex $u \in U$ with $x(u)>0$. 
%  In the remaining case, $U$ will be a stretch of vertices in $C$,
%  which includes the knot vertex only if $U=V(C)$.
%  If every vertex of $U$ has only one shortest path in $B$,
%  then let $u$ be one of the end points of the stretch $U$,
%  chosen so that the shortest path to $u$ goes through $U$.
%  Then $z_x(v)<z_x(u)$, and the shortest path to $v$ in $B$
%  must be $P_2$ (using terminology as above). 
%  Otherwise, finally, $B$ contains two paths into $U$
%  of equal length. In this case, we select $u \in U$
%  arbitrarily, decompose $B$ into $P_1+P_2$, use $P_1$
%  as a shortest path to $u$, and shorten $P_2$ into a
%  shortest path of a neighbour $v$ of $u$, as above. 
\maybeqed
\end{proof}

\subsection{The separation oracle}
\label{sec:seporacle}

We now finish the results of this section by constructing a
polynomial-time separation oracle over the local LP, assuming a membership
oracle for the class $\cB$. The result essentially follows from
Corollary~\ref{cor:balloons-twoshortest} by slightly perturbing 
edge lengths so that shortest paths are unique.

\begin{theorem} \label{thm:optimise}
  Let $\Psi=(G=(V,E),\cB)$ be a biased graph, and let $v_0 \in V$
  be the root vertex. Assume that we have access to a polynomial-time
  membership oracle for $\cB$, that for every simple cycle $C$ of $G$
  can inform us whether $C \in \cB$ or not. 
  Then there is a polynomial-time separation oracle for the 
  local LP rooted in $v_0$.
\end{theorem}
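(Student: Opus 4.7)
Here's the plan. By Corollary~\ref{cor:balloons-twoshortest}, the minimum weight over all $v_0$-balloons under a fractional assignment $x$ is exactly $z_x(u)+z_x(v)+(x_u+x_v)/2$ for some edge $uv \in E$ together with a pair of shortest $v_0$-to-$u$ and $v_0$-to-$v$ paths (under the edge metric $\ell_x$) that close an unbalanced cycle through $uv$. So, to decide whether $x$ violates any balloon constraint, it suffices to enumerate edges $uv \in E$ and, for each, test whether the shortest-path-based candidate cycle through $uv$ is unbalanced and its associated balloon has weight strictly below~$1$.

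Concretely, I would run one single-source shortest-path computation from $v_0$ under $\ell_x(uv)=(x_u+x_v)/2$, breaking ties lexicographically so that the shortest $v_0$-$y$ path $P_y^\star$ is uniquely determined for every vertex $y$. For each edge $uv \in E$, the paths $P_u^\star$ and $P_v^\star$ share a common prefix up to some last common vertex $r$ and are internally disjoint thereafter (uniqueness of shortest paths prevents them from re-meeting), so $P_u^\star \cup \{uv\} \cup P_v^\star$ decomposes cleanly into a $v_0$-$r$ path $P$ together with a simple cycle $C$ through $uv$. I invoke the membership oracle on $C$; if $C \notin \cB$ and the corresponding balloon weight, namely $z_x(u)+z_x(v)+(x_u+x_v)/2$ (computed via decomposition~3 of Lemma~\ref{lemma:decompositions}), is strictly less than~$1$, I return $(P,C)$ as a violated constraint. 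If no edge yields a violation, I declare $x$ feasible.

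Correctness reduces to Corollary~\ref{cor:balloons-twoshortest}: any genuine violated balloon has, by the corollary, a decomposition of the enumerated form, so the corresponding edge $uv$ will be processed and a violation detected; conversely every $(P,C)$ the procedure emits is manifestly a legitimate $v_0$-balloon. The main obstacle I anticipate is handling non-unique shortest paths: for a given edge $uv$, the canonical (lexicographically-first) pair of shortest paths may close a \emph{balanced} cycle, whereas a different, equally short pair would have closed the unbalanced cycle of the same weight whose existence the corollary guarantees. Showing that a suitable tie-breaking rule --- or an infinitesimal generic perturbation of $x$ --- still lets us witness the minimum-weight balloon whenever one exists (or, failing that, enumerating polynomially many candidate cycles per edge via the shortest-path DAG to cover every tie class) is the genuine content of the argument. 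Once this subtlety is resolved, the running time is clearly polynomial: one shortest-path computation, $O(|E|)$ oracle calls, and constant arithmetic work per edge, all polynomial under the assumption of a polynomial-time membership oracle for $\cB$.
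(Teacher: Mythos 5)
Your proposal takes essentially the same approach as the paper: reduce to Corollary~\ref{cor:balloons-twoshortest}, enumerate edges $uv\in E$, reassemble the candidate balloon from the two shortest paths plus $uv$, and query the membership oracle. The one step you flag but leave open --- handling non-unique shortest paths --- is resolved in the paper exactly by the perturbation route you suggest: each edge $e_i$ is assigned the weight tuple $(\ell_x(e_i),2^i)$ compared lexicographically, which makes all shortest paths (and the min-weight balloon) unique, and Corollary~\ref{cor:balloons-twoshortest} still applies to these modified weights because its proof uses only additions and comparisons; uniqueness then guarantees that for the edge $uv$ of the true min-weight balloon, the canonical shortest-path pair is the balloon's own decomposition, so its cycle is unbalanced and no enumeration over tie classes is needed.
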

\begin{proof}
  Let $x: V \to [0,1]$ be a fractional assignment, where we want to 
  decide whether $x$ is feasible for the LP, i.e., we wish to decide
  whether there is a $v_0$-balloon $B$ such that $w(B)<1$ under $x$.
  Order the edges of $G$ as $E=\{e_1,\ldots,e_m\}$. We will
  associate with each edge $e_i \in E$ a tuple $(\ell_x(e_i), 2^i)$, 
  and modify our distance measure to work componentwise over
  these tuples, and order distances over such tuples
  in lexicographical order. This is to deterministically 
  simulate the perturbation of all weights by a
  random infinitesimal amount.
  Note that all paths and cycles (in fact, all sets of edges) have unique 
  lengths in this way. In particular, for every vertex $u$
  there is a unique shortest path $P_u$ from $v_0$ to $u$.
  Also observe that computations over these weights (and the computations
  of shortest paths) can be done in polynomial time.

  Assume that there is a balloon $B$ with $w(B)<1$. Then the modified
  weight of $B$ will be $(w(B), \alpha)$ for some $\alpha \in \N$,
  where clearly $w(B)<1$ still holds. Also note that 
  Corollary~\ref{cor:balloons-twoshortest} still applies to the modified weights,
  since the proof consists only of additions and comparisons. 
  Hence, if $B$ is the (unique) min-weight balloon under the 
  modified weights, then $B$ can be decomposed into two shortest paths 
  $P_u$, $P_v$ and an edge $uv$, so that
  $w(B)=\ell_x(P_u)+\ell_x(P_v)+\ell_x(uv)=z_x(u)+z_x(v)+\ell_x(uv)$.
  Since shortest paths are unique, we can find the components $P_u$,
  $P_v$ and $uv$ defining $B$, if starting from the edge $uv$.
  From these paths, it is easy to reconstruct the cycle $C$
  and verify via an oracle query that $C \notin \cB$. 
  By iterating the procedure over all edges $uv \in E$, 
  we will find the min-weight balloon. 
%  (We also note that for any edge $uv \in E$, the pair of paths
%  $P_u$ and $P_v$ either use the edge $uv$, so that one of them
%  is a prefix of the other, or form a ``balloon shape'' with $uv$, 
%  i.e., a shared prefix path followed by a cycle. However,
%  this observation is not required for the proof.)
\maybeqed
\end{proof}

\section{Half-integrality and persistence of the local LP}

We now proceed to use the insights gained in the previous section
to show that the local LP is in fact half-integral and obeys a 
strong persistence property.
The approach of the proof is closely related to that of Guillemot
for variants of \textsc{Multiway Cut}~\cite{Guillemot11}.

\subsection{Half-integrality}
\label{sec:halfint}

Let $c: V \to \Q$ be arbitrary vertex weights, 
let $x^*$ be an optimal solution to the above LP, 
and let $y^*$ be an optimum solution to the dual.
By complementary slackness, if $y^*_B>0$ then $w(B)=1$ under $x^*$,
and if $x^*_v>0$ then the packing $y^*$ saturates $v$ to capacity $c_v$. 
Let $V_R$ denote the set of vertices reachable from $v_0$ at distance $0$
(also implying that $x^*_v=0$ for every $v \in V_R$). We let $V_1=\{v \in V: x_v=1\}$
and $V_{1/2}=N(V_R) \setminus V_1$. We claim that $\tfrac 1 2 V_{1/2}+V_1$ is a new
LP-optimum. This will follow relatively easily from the ``path pair perspective'' 
on balloons.

First, we give a structural lemma.

\begin{lemma} \label{lemma:balloonsupporttypes}
  Let $B=(P,C)$ be a balloon with $w(B)=1$ with respect to the
  vertex weights $x^*$. 
  Then $B$ is of one of the following types:
  \begin{enumerate}
  \item $B$ is contained within $G[V_R+v]$ for some vertex $v$, 
    where $v \in V_1 \cap (V(C) \setminus V(P))$;
  \item $B \cap N(V_R)$ is a single vertex $v \in V_{1/2}$, and $v \in V(P)$;
  \item $B \cap N(V_R)$ consists of two vertices $u,v$ which cut $C$ ``in
    half'' as follows: $C$ consists of two $uv$-paths of which one is
    contained in $V_R$ and contains at least one internal vertex $v'$
    which is the knot vertex of $B$, and the other path is disjoint from $V_R$. 
  \end{enumerate}
\end{lemma}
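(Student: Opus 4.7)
The plan is to exploit LP feasibility: since every balloon $B'$ must satisfy $w(B') \geq 1$, I can rule out candidate configurations for $B$ by exhibiting an alternative balloon of weight strictly less than $1$. As preliminaries I note that $w(B) = 2\ell_x(P) + \ell_x(C) = 1$ forces $\ell_x(P) \leq 1/2$ and $\ell_x(C) \leq 1$; moreover, $P$ must be a shortest $v_0$-to-$v_k$-path among those whose internal vertices avoid $V(C)$, since otherwise substituting the shorter such path would give a balloon of weight $< 1$. I also use that any path from $V_R$ to $V \setminus V_R$ in $G$ crosses $N(V_R)$, that $x^*_v > 0$ for every $v \in N(V_R)$ (else the edge from $V_R$ to $v$ would be $0$-length and force $v \in V_R$), and that $G[V_R]$ is balanced (else a balloon of weight $0$ exists).

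I then case-split on $k = |V(B) \cap N(V_R)|$. The case $k = 0$ forces $V(B) \subseteq V_R$, contradicting balance of $V_R$. In the case $k = 1$, write $V(B) \cap N(V_R) = \{v\}$ and distinguish whether $v$ lies on $V(P)$ or on $V(C) \setminus V(P)$. If $v \in V(P)$, the coefficient $2$ of $v$ in $w(B)$ yields $2 x^*_v \leq 1$, so $x^*_v \leq 1/2$ and $v \in V_{1/2}$; uniqueness of the boundary crossing forces the remainder of $B$ into $V \setminus (V_R \cup N(V_R))$ apart from $v$, matching type 2. If $v \in V(C) \setminus V(P)$, then $V(P) \cap N(V_R) = \emptyset$ forces $V(P) \subseteq V_R$ and $v_k \in V_R$; cycle-parity of $V_R$-boundary crossings then forces the outside-arc of $C$ to be the singleton $\{v\}$, so $V(C) \setminus V_R = \{v\}$ and $w(B) = x^*_v = 1$ pins $v \in V_1$, matching type 1.

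The heart of the argument is $k \geq 2$, where I must show $k = 2$ and the type-3 structure. The proof rests on two reconfiguration sub-claims. First, $V(P) \subseteq V_R$: otherwise $P$ either re-enters $V_R$ or admits a $V_R$-shortcut past one of its $N(V_R)$-vertices, and replacing the excursion by a $V_R$-path (using connectivity of $V_R$ and re-routing to avoid $V(C) \cap V_R$ internally) yields a $v_0$-$C$-path $P'$ with $\ell_x(P') < \ell_x(P)$, violating feasibility of $B' = (P', C)$. Second, $V(C) \cap V_R$ is a single non-empty arc of $C$: otherwise, a $V_R$-chord $Q$ between two arcs (with $\ell_x(Q) = 0$) reconfigures $C$, and by linearity of $\cB$ at least one of the two resulting cycles is unbalanced; both candidates retain a positive-weight ``bridge'' inherited from the old outside-arcs (whose endpoints lie in $N(V_R)$ and have positive $x^*$), so the unbalanced cycle $C'$ has $\ell_x(C') < \ell_x(C) \leq 1$ and the balloon $(P'', C')$ with $P'' \subseteq V_R$ has weight $< 1$. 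Combining the two sub-claims, the complementary arc of $C$ lies in $V \setminus V_R$ with both endpoints in $N(V_R)$; one further chord-path argument, in the spirit of Lemma~\ref{lemma:cycleintersect}, rules out any additional interior $N(V_R)$-vertex on this outside-arc, pinning $k = 2$ and matching type 3.

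The step I expect to require the most care is the chord selection inside the single-arc sub-claim: biased-graph linearity only guarantees that \emph{some} one of the two new cycles is unbalanced, not which one, so I need to choose the chord through $V_R$ so that \emph{both} candidate cycles strictly shorten. I expect this to be achievable by choosing the chord to connect two $V_R$-arcs that each flank a non-trivial outside-arc, so that each reconfigured cycle inherits a positive-weight bridge, paralleling the bookkeeping in the proof of Lemma~\ref{lemma:cycleintersect}.
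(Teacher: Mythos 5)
Your overall strategy differs from the paper's: the paper derives the lemma almost immediately from Corollary~\ref{cor:balloons-twoshortest}, which says every tight (hence min-weight) balloon decomposes as two \emph{shortest} paths $P_u,P_v$ plus an edge $uv$, and then uses the fact that a shortest path under $\ell_x$ crosses $N(V_R)$ at most once and never returns to $N[V_R]$. You instead rebuild the structure from scratch by direct reconfiguration. Your cases $|V(B)\cap N(V_R)|\le 1$ are correct, and your second sub-claim (the single-arc property of $V(C)\cap V_R$, with the careful choice of chord so that \emph{both} candidate cycles strictly shorten) is sound and is essentially the weighted analogue of Lemma~\ref{lemma:cycleintersect}.

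The gap is in your first sub-claim, $V(P)\subseteq V_R$. Your dichotomy --- ``$P$ either re-enters $V_R$ or admits a $V_R$-shortcut past one of its $N(V_R)$-vertices'' --- is not exhaustive. The missing case is that $P$ leaves $V_R$ exactly once, at a vertex $w\in N(V_R)$, and never returns to $N[V_R]$, while the second boundary vertex required by $k\ge 2$ lies on $V(C)\setminus V(P)$, say $u\in V(C)\cap N(V_R)$. Here there is no excursion of $P$ to shortcut, and the only move available to you while keeping $C$ fixed is to re-root the balloon at $u$ via a $V_R$-path; but that balloon has weight $x^*_u+\ell_x(C)$ versus $w(B)=2\ell_x(P)+\ell_x(C)$, and nothing forces $x^*_u<2\ell_x(P)$ (e.g.\ $x^*_w=0.1$, $x^*_u=0.3$, $\ell_x(C)=0.8$ gives $w(B)=1$ but the re-rooted balloon has weight $1.1$). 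To kill this configuration you must reconfigure $C$ itself, using as chord path the walk from $u$ back through $V_R$ to $v_0$ and out along $P$ to the knot $v_k$: either resulting unbalanced cycle passes through $v_0$ and has weight at most $\ell_x(P)+\ell_x(C)<w(B)=1$, which is the needed violation. This is precisely the content of Lemma~\ref{lemma:balloon-shortest-paths} (every vertex of a min-weight balloon is reached by a shortest path inside the balloon), which your proposal bypasses; either import that lemma, or add this cycle-reconfiguration case explicitly to sub-claim 1. The rest of your argument goes through once this is repaired.
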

\begin{proof}
  Since $x^*$ is a feasible solution for the local LP, 
  any balloon with $w(B)=1$ is a min-weight balloon. 
  Thus by Corollary~\ref{cor:balloons-twoshortest}, we can 
  decompose $B$ as $P_u+P_v+uv$ where $P_u$ and $P_v$ are shortest paths.
  In particular, both $P_u$ and $P_v$ contain a prefix in $V_R$, 
  make at most one visit to $N(V_R)$, and proceed to subsequently 
  not revisit $N[V_R]$ at all.
  Also note that $B$ necessarily intersects $N(V_R)$, 
  since $B$ is a connected subgraph rooted in $v_0$
  but not contained in $V_R$. 

  First assume that $B$ intersects a vertex of $v_1 \in V_1$. Then since $w(B)=1$,
  only one of the paths $P_u$, $P_v$ contains $v_1$, whereas the other path
  is entirely contained in $V_R$. But then we must have $v_1 \in \{u,v\}$,
  as otherwise the edge $uv$ cannot exist. We conclude that in this case,
  $B$ is a balloon of type 1.

  Next, assume that $B$ intersects $N(V_R)$ in a single vertex $v' \in V_{1/2}$.
  We claim that $B$ must intersect $v'$ with a coefficient of $2$:
  Indeed, if not, then one of the paths $P_u$ and $P_v$, say $P_u$,
  must be entirely contained in $V_R$, in which case $v \in N(u)$
  must be contained in $N(V_R)$ and we are back in the previous case
  (contradicting $v' \in V_{1/2}$). Thus $v' \in V(P)$, and $B$ is a 
  balloon of type 2.

  Finally, assume that $B$ intersects $N(V_R)$ in at least two vertices.
  Then in fact $B$ intersects $N(V_R)$ in exactly two vertices $u', v'$,
  by the properties of $P_u$ and $P_v$, and neither of these vertices are
  in $V_1$, since $w(B)=1$. Furthermore, since each of $P_u$ and $P_v$
  intersects $N(V_R)$ only once, these vertices $u'$, $v'$ lie 
  after the common part of $P_u$ and $P_v$, i.e., in $V(C) \setminus V(P)$.
  Then indeed the knot vertex lies in $V_R$, whereas the path
  from $u'$ to $v'$ via $uv$ lies outside of $V_R$, as required. 
\maybeqed
\end{proof}

We now show that complementary slackness implies that the new
fractional assignment is actually an LP-optimum.

\begin{lemma} \label{lemma:halfoptimum}
  The assignment $V_1+\tfrac 1 2 V_{1/2}$ is an LP-optimum for the local LP.
\end{lemma}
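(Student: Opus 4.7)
The plan is to verify (a) $\hat{x} := V_1 + \tfrac{1}{2} V_{1/2}$ is primal-feasible for the local LP, and (b) $c^T \hat{x} = c^T x^*$. For (a), I would first observe that $G[V_R]$ must be balanced, since otherwise an unbalanced cycle in $V_R$ together with any $V_R$-path to $v_0$ would form a balloon of $w_{x^*}$-weight $0$, contradicting feasibility of $x^*$. Then fix a balloon $B=(P,C)$ and split on $V(B) \cap V_1$. If $V(B) \cap V_1 \neq \emptyset$, some vertex has $w$-coefficient at least $1$, so $w_{\hat{x}}(B) \geq 1$ directly. Otherwise I track where $B$ first leaves $V_R$: if the first exit lies on $P$, that vertex lies in $N(V_R) \setminus V_1 = V_{1/2}$ and sits on $P$ (as either an internal vertex or the knot), so its coefficient is $2$ and we are done. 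Else $V(P) \subseteq V_R$, so the knot itself lies in $V_R$ and $C$ must leave $V_R$. Decompose $V(C) \setminus V_R$ into its maximal cycle-arcs; an arc of length one contributes one endpoint to $V(C) \cap N(V_R)$, an arc of length at least two contributes two. If the total is at least $2$, the $V_{1/2}$-coefficients in $B$ sum to $\geq 2$ and we are done. The only remaining subcase is a single arc of length one, $V(C) \setminus V_R = \{v\}$; but then $w_{x^*}(B) = x^*_v$, so feasibility of $x^*$ forces $v \in V_1$, contradicting the case hypothesis.

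For (b), I would use strong duality $c^T x^* = \sum_B y^*_B$ and complementary slackness. Every $v \in V_1$ has $x^*_v > 0$; every $v \in V_{1/2}$ also has $x^*_v > 0$, because a vertex $v \in N(V_R)$ with $x^*_v = 0$ would itself be at distance $0$ from $v_0$ via its $V_R$-neighbour and hence lie in $V_R$. Thus in both cases $c_v = \sum_B \alpha(v,B)\, y^*_B$, where $\alpha(v,B) \in \{1,2\}$ denotes the coefficient of $v$ in $w(B)$. Expanding and swapping sums,
\[
c^T \hat{x}
= \sum_{v \in V_1} c_v + \tfrac{1}{2} \sum_{v \in V_{1/2}} c_v
= \sum_B y^*_B \Bigl( \sum_{v \in V_1} \alpha(v,B) + \tfrac{1}{2} \sum_{v \in V_{1/2}} \alpha(v,B) \Bigr).
\]
I would then read off from Lemma~\ref{lemma:balloonsupporttypes} that the bracketed quantity equals $1$ for every tight balloon: type $1$ gives $1 \cdot 1$, type $2$ gives $\tfrac{1}{2} \cdot 2$, and type $3$ gives $\tfrac{1}{2}(1+1)$. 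Hence $c^T \hat{x} = \sum_B y^*_B = c^T x^*$, and combined with feasibility this makes $\hat{x}$ an LP-optimum.

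The hard part is the final subcase of feasibility, where $C$ crosses $N(V_R)$ exactly once. A naive coefficient count there yields only $\tfrac{1}{2}$ total weight under $\hat{x}$, so the argument must pivot from reasoning about $\hat{x}$ back to reasoning about $x^*$, using that the balloon weight collapses to the scalar $x^*_v$ in this geometry and is thereby forced into $V_1$. Once that pivot is handled, the remainder is routine bookkeeping with complementary slackness and the three cases of Lemma~\ref{lemma:balloonsupporttypes}.
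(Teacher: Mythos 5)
Your proof is correct and follows essentially the same route as the paper: feasibility by analysing how a cheap balloon could cross $N(V_R)$, reducing to the single-crossing case where feasibility of $x^*$ forces the crossing vertex into $V_1$, and optimality via complementary slackness combined with the three balloon types of Lemma~\ref{lemma:balloonsupporttypes}. Your feasibility case analysis is somewhat more explicit than the paper's terse version, but the underlying argument is identical.
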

\begin{proof}
  We first show that $V_1+\tfrac 1 2 V_{1/2}$ is a valid LP-solution.
  Assume towards a contradiction that $w(B)<1$ for some balloon $B=(P,C)$
  under the proposed weights. Note that $V_1 \cup V_{1/2}$ intersects
  every balloon since $G[V_R]$ is balanced. If $w(B)<1$ we must thus have
  $B \cap (V_1 \cup V_{1/2})=\{v\}$ for some $v \in V_{1/2}$, where $v \in
  (V(C) \setminus V(P))$. But then $V(B) \subseteq V_R \cup \{v\}$ 
  (since otherwise $C$ passes the ``border'' $N(V_R)$ in at least two locations), 
  which implies $x^*_v=1$, contrary to assumptions.

  We now show optimality. 
  By complementary slackness, $y^*$ is a packing of balloons saturating
  every $v \in N(V_R)$ to its capacity $c_v$, with $w(B)=1$ for every
  balloon $B$ in the support of $y^*$; hence $B$ will be of one of the
  three types of Lemma~\ref{lemma:balloonsupporttypes}. For $i=1, 2, 3$,
  let $\cB_i$ be the set of balloons of type $i$ from the support of
  $y^*$. By optimality, $c^Tx^*=\sum_B y^*_B$. Note that a balloon
  of type 1 intersects one vertex in $V_1$ with coefficient 1 and no other
  vertex in $V_1 \cup V_{1/2}$, while a balloon of type 2 or 3 intersects
  one vertex in $V_{1/2}$ with coefficient 2, respectively two vertices in
  $V_{1/2}$ with coefficient 1 each, and no other vertex from $V_1 \cup
  V_{1/2}$.  Since every vertex of $V_1 \cup V_{1/2}$ is in the support of $x^*$,
  these vertices are saturated by the packing $y_B$. For a balloon $B$ and vertex $v$, 
  let $c(B,v) \in \{0,1,2\}$ be the coefficient of $v$ in the constraint $w(B) \geq 1$. 
  We thus get 
  \[
  %\sum_{B \in \cB_1} y^*_B = \sum_{v \in V_1} c_v
  \sum_{v \in V_1} c_v = \sum_{v \in V_1} \sum_{B \in \cB_1} c(B,v) \cdot y^*_B
  = \sum_{B \in \cB_1} y^*_B \sum_{v \in V_1} c(B,v) = \sum_{B \in \cB_1} y^*_B
  \]
  and
  \[
  %\sum_{B \in \cB_2 \cup \cB_3} 2y^*_B = \sum_{v \in V_{1/2}} c_v
  \sum_{v \in V_{1/2}} c_v = \sum_{v \in V_{1/2}} \sum_{B \in \cB_2 \cup \cB_3} c(B,v) \cdot y^*_B
  = \sum_{B \in \cB_2 \cup \cB_3} y^*_B \sum_{v \in V_{1/2}} c(B,v) = \sum_{B \in \cB_2 \cup \cB_3} 2y^*_B,
  \]
  hence
  \[
  c^Tx^*=\sum_B y^*_B = c^T(V_1 + \frac 1 2 V_{1/2})
  \]  
  which shows that $V_1+\tfrac 1 2 V_{1/2}$ is an LP-optimum. 
\maybeqed
\end{proof}

\subsection{Persistence and the tiling property}
\label{sec:tiling}
Finally, we reach the statement concerning the persistence of the local LP. 
Since the statement is somewhat intricate, let us walk through it. First of all, 
the basic persistence property is similar to that used in 
Multiway Cut~\cite{GargVY04,Guillemot11,CyganPPW13MWC}.
Let $V_R$ be defined as before, and for a solution $X \subseteq V$ to \localproblem
(hence $v_0 \notin X$) 
define $S_X$ as  the set of vertices of the connected component of $G-X$ containing $v_0$.
Then persistence dictates that there is an optimal solution $X$
such that if $z_x(v)=0$, i.e., 
if $v \in V_R$ then $v \in S_X$, and if $v \in V_1$ then $v \in X$. 
We note that both these properties hold for the computed set $S'$ in the below lemma. 

However, the lemma also gives a useful tiling property, for the purposes
of solving the global problem: Let $X$ be a solution to the full, global
\problem problem, with $v_0 \notin X$, and let $S$ be the vertices
reachable from $v_0$ in $G-X$. Then it ``does not hurt'' the global solution
to assume that the induced solution $X \cap N(S)$ to the local problem
also observes the persistence properties as above. 
This is implied by the closed-neighbourhood condition on $S^+$:
We will ``cut away'' a section $G[S^+]$ of the initial graph,
and find a new solution for it such that all vertices of $S^+$
neighbouring $V \setminus S^+$ are deleted, and so that the solution
in $G[S^+]$ respects persistence properties while not being more expensive
than the original solution $X \cap S^+$.
See Figure~\ref{fig:reconfigure} for an illustration of the modification of $S$.
This allows us to assemble a solution to the global problem
out of pieces computed for instances of the local problem.

\begin{figure}
  \begin{subfigure}{0.5\textwidth}\centering
  \scalebox{0.85}{\begin{tikzpicture}
  \node[circle, draw, fill=green!50, minimum width = 6.3 cm,thick] (sbig) at (3,0) {};
  \node[circle,draw,minimum width=5 cm,fill=white] (s) at (3,0) {};
  \node[circle split, draw, circle split part fill={red!50, blue!50}, minimum width = 6 cm,thick] (vrbig) at (0,0) {};
  \draw[thick] (0,0) -- (30:3 cm);
  \draw[thick] (0,0) -- (-30:3 cm);
  \node[circle,draw, minimum width=5 cm,fill=white] (vr) at (0,0) {$V_R$};

  \node[circle, draw, dashed, minimum width = 6.3 cm] at (3,0) {};

  \node[circle,draw,minimum width=5 cm] at (3,0) {};

  \node[anchor=south] at (vr.north) {$V_1$};
  \node[anchor=north] at (vr.south) {$V_{1/2}$};
  
  \node[anchor=south] at (s.north) {$N(S) \setminus N[V_R] $};

  \node at (3.5,0) {$S$};
  \node[circle,draw,minimum width=1.5 cm] (uset) at (1.5,0) {};

  \node[anchor=south] at (uset.south) {$U$};

  \node (v0) at (1.5,0) {$v_0$};

  \node (va) at (1.8,0.25)  {$\bullet$};
  \node (vb) at (1.8,-0.25)  {$\bullet$};
  
  \node (vc) at (2.6,0.75) {$\bullet$};
  \node (vd) at (2.6,-0.75) {$\bullet$};
  
  \path[draw] (va) -- (v0) -- (vb);
  \path[draw] (vc.center) -- (va.center);
  \path[draw] (vd.center) -- (vb.center);

  \node(vl) at (1.1,0) {$\bullet$};
  \node(vll) at (0.3,0.5) {$\bullet$};
  \node(vlh) at (0.3,-0.5) {$\bullet$};

  \path[draw] (vlh.center) -- (vl.center) -- (vll.center);
  \path[draw] (vl.center) -- (1.35,0);

  \node[circle,draw,minimum width=0.75 cm] (utop) at (1.5,1.2) {$\bullet$};
  \node[circle,draw,minimum width=0.75 cm] (ubot) at (1.5,-1.2) {$\bullet$};
  
  \node (u1) at (2,1.8) {$\bullet$};
  \node (u2) at (2,-1.8) {$\bullet$};
  \node[right=of u1] (u3) {$\bullet$};
  \node[right=of u2] (u4) {$\bullet$};
  
  \path[draw] (vll.center) -- (utop.center) -- (u1.center) -- (u3.center) -- (vc.center);
  \path[draw] (vlh.center) -- (ubot.center) -- (u2.center) -- (u4.center) -- (vd.center);
  
\end{tikzpicture}}
\caption{The sets $S$ and $N[S]$ interacting with~$V_R$.}% Note that $S \cap V_R$ need not be connected.} %: The sets $V_R$, $V_1$, $V_{1/2}$ and $S$. Note that $V_R \cap S$ need not be connected.}
\end{subfigure}
\begin{subfigure}{0.5\textwidth}\centering
\scalebox{0.85}{\begin{tikzpicture}
  \node[circle, draw, fill=green!50, minimum width = 6.3 cm,thick] (sbig) at (3,0) {};
  \node[circle,draw,minimum width=5 cm,fill=white] (s) at (3,0) {};
  \node[circle split,draw,fill=green!50,minimum width=6 cm] at (0,0) {};
  \path[draw,fill=white]  (0,0) -- (3,0) arc (0:-30:3cm)--cycle;
  \draw[thick] (0,0) -- (30:3 cm);
  \draw[thick] (0,0) -- (-30:3 cm);
  \node[circle,draw, minimum width=5 cm,fill=white] (vr) at (0,0) {$V_R$};

%  \node[circle,draw,minimum width=5 cm] at (3,0) {};

  \node[anchor=south] at (vr.north) {$V_1$};
  \node[anchor=north] at (vr.south) {$V_{1/2}$};
  
  \node[anchor=south] at (s.north) {$N(S) \setminus N[V_R] $};

  \node at (4,0.5) {$S \setminus N[V_R]$};

  \node[draw,dashed,circle,minimum width=1.5 cm] (uset) at (1.5,0) {};

  \node[anchor=south] at (uset.south) {$U$};

  \node (v0) at (1.5,0) {$v_0$};

  \node (va) at (1.8,0.25)  {$\bullet$};
  \node (vb) at (1.8,-0.25)  {$\bullet$};
  
  \node (vc) at (2.6,0.75) {$\bullet$};
  \node (vd) at (2.6,-0.75) {$\bullet$};
  
  \path[draw] (va) -- (v0) -- (vb);
  \path[draw] (vc.center) -- (va.center);
  \path[draw] (vd.center) -- (vb.center);

  \node(vl) at (1.1,0) {$\bullet$};
  \node(vll) at (0.3,0.5) {$\bullet$};
  \node(vlh) at (0.3,-0.5) {$\bullet$};

  \path[draw] (vlh.center) -- (vl.center) -- (vll.center);
  \path[draw] (vl.center) -- (1.35,0);

  \node (utop) at (1.5,1.2) {$\bullet$};
  \node (ubot) at (1.5,-1.2) {$\bullet$};
  
  \node (u1) at (2,1.8) {$\bullet$};
  \node (u2) at (2,-1.8) {$\bullet$};
  \node[right=of u1] (u3) {$\bullet$};
  \node[right=of u2] (u4) {$\bullet$};
  \node[anchor=west] at (vd.north west) {$N(U) \cap V_{1/2} \cap S$};
  
  \path[draw] (vll.center) -- (utop.center) -- (u1.center) -- (u3.center) -- (vc.center);
  \path[draw] (vlh.center) -- (ubot.center) -- (u2.center) -- (u4.center) -- (vd.center);
\end{tikzpicture}}
\caption{The sets $S'$ (white) and $S^+$ (green+white).} %: The sets $S'$ (white) and $S^+ \setminus S'$ (green).}
\end{subfigure}
\caption{Illustration of the reconfiguration in the tiling property in
  Lemma~\ref{lemma:persistence}.  The sets $S$ and $N(S)$ are reconfigured using $V_R$, $V_1$, $V_{1/2}$ 
into $S'=V_R \cup (V_{1/2} \cap N(U) \cap S) \cup (S \setminus N[V_R])$ (white in second picture) and $S^+ \setminus S'$ (green in second picture)}
\label{fig:reconfigure}
\end{figure}

\begin{lemma} \label{lemma:persistence}
  Let $x=V_1+\tfrac 1 2 V_{1/2}$ be the half-integral optimum from above,
  and let $V_R$ be the corresponding reachable region. Let $S$ be a
  balanced set with $v_0 \in S$. Then we can grow the closed region $N[S]$
  to $N[S \cup V_R]$ without paying a larger cost for deleting vertices.
  More formally, there is a set of vertices $S^+$ and a set $S' \subseteq S^+$ 
  such that $G[S']$ is balanced and the following hold.
  \begin{enumerate}
  \item $S^+=N[S \cup V_R]$;
  \item $N[S'] \subseteq S^+$;
  \item $V_R \subseteq S'$;
  \item $V_1 \subseteq (S^+ \setminus S')$;
  \item $c(S^+ \setminus S') \leq c(N(S))$.
  \end{enumerate}
\end{lemma}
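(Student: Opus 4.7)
The plan is to set $S^+:=N[S\cup V_R]$ and construct $S'=S^+\setminus X^+$ for an integer local deletion $X^+$ built from $V_1$ together with a carefully chosen ``tiling'' of $V_{1/2}$. A key preliminary observation is that $V_1\cup V_{1/2}\subseteq N(V_R)\subseteq S^+$. Indeed, by complementary slackness any vertex $v$ with $x^*_v>0$ appears in the support of some tight dual balloon $B$ with $y^*_B>0$; evaluating the identity $w(B)=1$ under $x^*$ against the three balloon types of Lemma~\ref{lemma:balloonsupporttypes} (type-1 contributing $1$ at a $V_1$-vertex, type-2 contributing $1$ at a $V_{1/2}$-knot, type-3 contributing $1/2+1/2$ at two $V_{1/2}$-vertices) forces $v$ into the distinguished $V_1$ or $V_{1/2}$ slot of $B$, all of which lie in $N(V_R)$. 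Consequently every tight balloon of $y^*$ is contained in $G[S^+]$, and $N(S)\subseteq S^+$ is a valid integer local deletion inside $G[S^+]$ whose $v_0$-component is exactly $S$.

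I will then construct $X^+$ by a tiling choice. Let $Y\subseteq V_{1/2}$ be obtained by selecting, for each type-3 pair $\{u,v\}\subseteq V_{1/2}$ in the support of $y^*$, exactly one vertex of $\{u,v\}$, and for each type-2 balloon the unique $V_{1/2}$-knot. Ties among type-3 pairs are broken using $S$: since $G[S]$ is balanced, the $V_R$-portion of a type-3 balloon together with both of $u,v$ would close an unbalanced cycle inside $G[S]$, so $u,v$ are never simultaneously in $S$; place in $Y$ any vertex of the pair not in $S$. Set $X^+:=V_1\cup Y$ and $S':=S^+\setminus X^+$. Conditions (3) and (4) are immediate. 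Condition (2) holds because $S'\subseteq V_R\cup (S\setminus(V_1\cup Y))$, both contained in $S\cup V_R$, and $N(S\cup V_R)\subseteq S^+$. The balancedness of $G[S']$ follows from Lemma~\ref{lemma:cycleintersect}: an unbalanced cycle in $G[S']$ can be reconfigured to meet $N(V_R)$ in at most two vertices, yielding a balloon in $G[S^+]$ entirely inside $S'$ and hence disjoint from $X^+$, contradicting the choice of $Y$ covering every type-2 knot and every type-3 pair.

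Condition (5), the cost inequality $c(X^+)\leq c(N(S))$, is the main obstacle. The plan is an exchange argument mediated by the dual packing $y^*$: $X^+$ differs from $N(S)$ by the additions $(V_1\cup Y)\setminus N(S)$ and the removals $N(S)\setminus X^+$ (which contain $N(S)\cap V_R$ plus, for each type-3 pair with both vertices in $N(S)$, the partner not placed in $Y$). I will set up a bipartite matching between added and removed vertices along common tight balloons, using that every added $V_1$-vertex lies in a type-1 balloon whose $V_R$-portion meets $N(S)$, and every added $V_{1/2}$-vertex lies in a type-2 or type-3 balloon similarly paired with a removed vertex. Complementary slackness — which equates $c_v$ with $\sum_B\mathrm{coeff}_B(v)\,y^*_B$ at every $v$ with $x^*_v>0$, together with the LP-duality equation $\sum_B y^*_B=c(V_1)+(1/2)c(V_{1/2})\leq c(N(S))$ — then balances the cost of additions against that of removals.

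The principal difficulty is the rigorous construction of this matching, ensuring that no added vertex is left unpaired and that the $V_{1/2}$-swaps respect both the balancedness of $G[S']$ and the closed-neighborhood constraint $N[S']\subseteq S^+$. I expect the argument to parallel the uncrossing technique used by Guillemot~\cite{Guillemot11} for multiway cut, with the three-type classification of Lemma~\ref{lemma:balloonsupporttypes} playing the role of the structure theorem for tight flow paths in that setting.
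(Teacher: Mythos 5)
There are genuine gaps here, the most important being exactly the one you flag yourself: condition (5) is the heart of the lemma, and your proposal does not prove it. With your definitions, $S^+\setminus S'=V_1\cup Y$, so you would need $c(V_1)+c(Y)\leq c(N(S))$ for your tiling set $Y\subseteq V_{1/2}$; this does not follow from LP optimality (in the weighted case $c(Y)$ can exceed $\tfrac 1 2 c(V_{1/2})$), and the ``bipartite matching along tight balloons'' that is supposed to close the gap is left entirely unconstructed. The paper makes this step concrete along a different route: it introduces $U$, the connected component of $v_0$ in $G[S\cap V_R]$, sets $S'=V_R\cup(N(U)\cap V_{1/2}\cap S)\cup(S\setminus N[V_R])$, and reduces (5) to $c(Z)\leq c(Y)$ with $Z=(V_1\setminus N(S))\cup(V_{1/2}\setminus N(U))$ and $Y=N(U)\cap V_R\subseteq N(S)$; the inequality is then proved by checking, case by case over the three types of Lemma~\ref{lemma:balloonsupporttypes}, that every balloon in the support of the dual optimum $y^*$ meets $Y$ with total coefficient at least as large as the coefficient with which it meets $Z$. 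The vertices of $N(S)$ lying inside $V_R$, which the new solution ``refunds,'' are precisely what pays for the extra deletions in $N(V_R)$; a per-balloon choice of one vertex from each type-3 pair does not see this trade-off at all.

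Two further problems with the construction itself. First, with $S':=S^+\setminus X^+$ and $X^+=V_1\cup Y$, the set $S'$ contains all of $N(S)\setminus(S\cup V_R\cup V_1\cup Y)$, which is generally nonempty and whose neighbourhoods need not lie in $S^+$; your claimed containment $S'\subseteq V_R\cup(S\setminus(V_1\cup Y))$ is therefore false, and condition (2) fails. Second, the balancedness argument is unsound: an unbalanced cycle in $G[S']$, reconfigured via Lemma~\ref{lemma:cycleintersect}, yields a balloon of weight at least $1$ under $x$, but not necessarily one in the support of $y^*$, so the $V_{1/2}$-vertices it meets need not form a ``type-3 pair'' for which your tiling made a selection --- both could have been retained in $S'$. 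The paper instead proves balancedness of its $S'$ by a direct reconfiguration argument that pushes any putative unbalanced cycle into one meeting $N(V_R)$ in a single $V_{1/2}$-vertex, contradicting feasibility of the half-integral optimum.
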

\begin{proof}
    Let $U$ be the connected component of $v_0$ in $G[S \cap V_R]$.
  We define the sets $S^+=N[S \cup V_R]$, and
  \[
  S' = V_R \cup (N(U) \cap V_{1/2} \cap S) \cup (S \setminus N[V_R]).
  \]
  Observe that $S' \subseteq S \cup V_R$ and that $V_1 \subseteq N(S')$. 
  We first show that $G[S']$ is balanced. Assume not, and let
  $C$ be an unbalanced cycle contained in $G[S']$. We may assume
  that $C$ is reduced as by Lemma~\ref{lemma:cycleintersect}
  with respect to $V_R$. Observe that $C$ must intersect $V_R$,
  as otherwise $V(C) \subseteq S$ contradicting that $G[S]$ is balanced.
  By Lemma~\ref{lemma:cycleintersect}, this intersection takes
  the form of a simple path $P_{ab}$ connecting two vertices $a, b \in N(V_R)$.
  Furthermore, we have $a, b \in N(U) \cap V_{1/2} \cap S$, 
  and the path $P_{ab}$ intersects $V_R \setminus S$, i.e.,
  $P_{ab}$ contains internal vertices not contained in $U$.
  Let $P_a$ respectively $P_b$ be the prefix respectively suffix of $P_{ab}$
  contained in $U$, if any, together with the vertices $a$ respectively $b$.
  Let $P_{ab}'$ be a new chord path connecting $P_a$ and $P_b$ in $U$,
  and let $C'$ be a new cycle resulting from the reconfiguration of $C$ by $P_{ab}'$.  
  Since $C'$ cannot be contained in $S$, this cycle must be formed from $P_{ab}+P_{ab}'$,
  and since $C'$ cannot be contained in $V_R$ it must still contain the vertices $a$ and $b$ 
  (i.e., we had $P_a=a$ and $P_b=b$; recall that $a,b \in V_{1/2}$). But then $V(C') \cap V_R$ consists
  of two distinct paths; use a chord path $P''$ between these paths
  to reconfigure $C'$ into a new cycle $C''$. Then we may see that 
  $|C'' \cap N(V_R)|=1$, namely only of the vertices $\{a,b\}$, 
  contradicting that $a, b \in V_{1/2}$. We conclude that $S'$ is balanced. 

  Items 1--4 in the lemma hold by definition or are easy, so it remains to show
  that $c(S^+ \setminus S') \leq c(N(S))$. Let us break down this expression.
  Writing $S^+=V_R \cup S \cup N(V_R) \cup N(S)$, first note that 
  $S^+ \setminus S' \subseteq N(V_R) \cup N(S)$ by definition of $S'$;
  more carefully,
  \[
  S^+ \setminus S' = (N(S) \setminus S') \cup (N(V_R) \setminus S').
  \]
  Vertices of $N(S) \setminus S'$ contribute equally to both sides of the inequality
  and can be ignored, hence we are left with 
  vertices of $N(V_R) \setminus (S' \cup N(S))$ contributing to the left hand side
  and vertices of $N(S) \cap S'=N(S) \cap V_R$ contributing to the right hand side. 
  Splitting $N(V_R)=V_1 \cup V_{1/2}$, the former set simplifies to
  $
  (V_1 \setminus N(S)) \cup (V_{1/2} \setminus (N(U) \cup N(S)).
  $
  Relaxing slightly we define
  \[
  Z:=(V_1 \setminus N(S)) \cup (V_{1/2} \setminus N(U))
  \]
  and 
  \[
  Y:=N(U) \cap V_R;
  \]
  it will suffice to show $c(Z) \leq c(Y)$. 
  This will occupy the rest of the proof.

  Let $y^*$ be the dual optimum, i.e., a fractional packing of balloons
  which saturates $v$ for every $v \in Z$, with each balloon $B$ in the
  support being of types 1--3 of Lemma~\ref{lemma:balloonsupporttypes} (by
  complementary slackness). Note that every vertex of $Z$ is in the support of $x$. 
  Let $\cB_1$ contain the balloons from the
  support of $y^*$ which intersect $Z$ with a total coefficient of $1$
  (i.e., $\cB_1$ contains balloons of type 1, and balloons of type 3 which
  intersect $Z$ in only one vertex), and let $\cB_2$ contain those which
  intersect $Z$ with a total coefficient of $2$ (i.e., balloons of type 2,
  and balloons of type 3 which intersect $Z$ in two vertices). Note that no
  balloon from the support of $y^*$ intersects $Z \subseteq N(V_R)$ with 
  a total coefficient of more than 2. Then
  \[
  c(Z) = \sum_{B \in \cB_1} y^*_B + \sum_{B \in \cB_2} 2y^*_B.
  \]
  We need to show that every $B \in \cB_1$ intersects $Y$ with a total
  coefficient of at least 1, and every $B \in \cB_2$ intersects $Y$ with a
  total coefficient of at least 2. The inequality will follow. 

  First consider $v \in V_1 \cap Z$, and let $B \in \cB_1$ intersect $v$.
  Then $B$ is of type 1, hence contained in $G[V_R+v]$. If $B-v \subseteq U$,
  then $B \cap N(S)=\{v\}$, but $v \in V_1 \cap Z$ implies $v \notin N(S)$;
  hence not all of $B-v$ is contained in $U$, and $B$ intersects $Y$. 
  
  Next, consider a vertex $v  \in V_{1/2} \cap Z$ and a balloon $B \in \cB_1$
  intersecting $v$ with coefficient 1; hence $B$ is of type 3. 
  Since $v \in Z$ we have $v \notin N(U)$; since $B$ 
  connects $v_0 \in U$ with $v \notin N[U]$ using internal vertices in $V_R$, 
  there must exist some vertex $u \in B$ contained in $N(U) \cap V_R=Y$. 
  Hence $B$ intersects $Y$. 

  Thirdly, consider some $v \in V_{1/2} \cap Z$ intersecting some $B \in \cB_2$
  with a coefficient of 2. Again $v \notin N(U)$, and $B$ contains
  a path from $v_0$ to $v$ with internal vertices in $V_R$; furthermore
  every vertex on this path has coefficient $2$ in $B$. Thus $B$ 
  intersects $Y$ with a coefficient of at least 2. 

  Finally, consider a balloon $B \in \cB_2$ of type 3, intersecting
  two vertices $v, v' \in V_{1/2} \cap Z$. Then $B$ traces two paths
  from $v_0$ to $v, v'$, and either both these paths intersect $Y$
  in a single vertex (which then has coefficient 2), or two distinct
  vertices of $Y$ intersect $B$ (for a total coefficient of 2). 
  
  This shows that every balloon $B$ in the support of $y^*$ intersects
  $Y$ with at least as large a total coefficient as it intersects $Z$.
  Since $y^*$ is a packing that saturates $Z$ and does not over-saturate
  any vertex, we have $c(Y) \geq \sum_{\cB_1} y^*_B + \sum_{\cB_2} 2y^*_B = c(Z)$
  as promised. This finishes the proof.
\maybeqed
\end{proof}

\section{The FPT algorithms}

We finally wrap up by giving our main results.

\subsection{\localproblem}\label{sec:local}

We use the results of the previous section to finalise
Theorem~\ref{thm:local}. We proceed by lemmas. Throughout, we assume
access to a membership oracle for the biased graph
so that we can optimise the LP.

\begin{lemma}
  \localproblem admits a 2-approximation, even for weighted graphs.
\end{lemma}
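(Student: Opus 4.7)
The plan is to solve the local LP and round its half-integral optimum up to an integral solution.

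First, I would apply Theorem~\ref{thm:optimise} together with the ellipsoid method to compute an optimal fractional solution $x^*$ of value $\lambda$ in polynomial time. From $x^*$, extract the reachable region $V_R$ and the sets $V_1 = \{v : x^*_v = 1\}$ and $V_{1/2} = N(V_R) \setminus V_1$ as in Section~\ref{sec:halfint}. Output the integral set $X := V_1 \cup V_{1/2}$.

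The approximation guarantee is then immediate. By Lemma~\ref{lemma:halfoptimum}, the half-integral assignment $V_1 + \tfrac{1}{2} V_{1/2}$ is itself an LP-optimum of value $\lambda$, so
\[
c(X) \;=\; c(V_1) + c(V_{1/2}) \;\leq\; 2\bigl(c(V_1) + \tfrac{1}{2} c(V_{1/2})\bigr) \;=\; 2\lambda \;\leq\; 2\cdot\mathrm{OPT},
\]
with the argument unchanged for arbitrary vertex weights, which settles the ``even for weighted graphs'' clause. Moreover, $v_0 \notin X$ because $v_0 \in V_R$ is disjoint from both $V_1$ (as $x^*_{v_0}=0$) and from $V_{1/2}\subseteq N(V_R)$.

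What remains is to check that $X$ is actually a feasible \localproblem solution, i.e., that the connected component of $v_0$ in $G - X$ is balanced. The cleanest route is via the following bridge between the LP and the combinatorial problem: an integer assignment $x \in \{0,1\}^V$ with $x_{v_0}=0$ is LP-feasible if and only if $\{v : x_v = 1\}$ is a feasible \localproblem solution. The forward direction is by inspection (the minimum coefficient of any non-$v_0$ vertex in $w(B)$ is $1$); the converse uses that any unbalanced cycle $C$ reachable from $v_0$ in $G-X$, combined with a shortest $v_0$-$C$-path in $G-X$, yields a balloon disjoint from $X$ and hence a violated constraint. Once this is in hand, feasibility of $X$ follows because rounding $V_1+\tfrac12 V_{1/2}$ up to $\{0,1\}$-values can only raise every balloon weight, so all balloon constraints still hold.

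The main obstacle, modest as it is, is establishing the equivalence between LP-feasibility and combinatorial feasibility for \localproblem; everything else is a direct consequence of Lemma~\ref{lemma:halfoptimum} together with the standard round-up argument for half-integral LPs. I expect the full write-up to be quite short.
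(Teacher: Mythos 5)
Your proposal is correct and matches the paper's own proof: compute the LP optimum via the separation oracle, pass to the half-integral optimum $V_1 + \tfrac12 V_{1/2}$ of Lemma~\ref{lemma:halfoptimum}, and round up to $X = V_1 \cup V_{1/2}$. The paper compresses the feasibility and cost verification into ``it is now clear''; you spell out both steps (the coefficient-at-least-one observation for the cost bound, and the balloon--unbalanced-cycle correspondence for feasibility), which is a sound and welcome expansion of the same argument rather than a different route.
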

\begin{proof}
  Let $x^* \in [0,1]^V$ be an LP-optimum computed via Theorem~\ref{thm:optimise}.
  Compute the sets $V_R$, $V_1$ and $V_{1/2}$ from $x^*$ as in Section~\ref{sec:halfint},
  forming a half-integral optimum $x$. It is now clear that the set
  $X=V_1 \cup V_{1/2}$ is an integral solution and a 2-approximation to the problem.
\maybeqed
\end{proof}

\begin{lemma} \label{lemma:thm:gap}
  \localproblem for unweighted graphs can be solved in $O^*(4^{k-\lambda})$ time,
  where $\lambda$ is the optimum of the local LP.
\end{lemma}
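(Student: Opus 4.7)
The plan is a standard LP-branching recursion with measure $\mu = k - \lambda$. At each recursive node on an instance $(\Psi, v_0, k')$, first apply Theorem \ref{thm:optimise} to solve the local LP and extract a half-integral optimum $x = V_1 + \tfrac 1 2 V_{1/2}$ of value $\lambda'$ via Lemma \ref{lemma:halfoptimum}, together with the reachable region $V_R$. Reject if $\lambda' > k'$; otherwise, if $V_{1/2} = \emptyset$ then $x$ is an integral optimum equal to $V_1$ of cost $\lambda' \leq k'$, which I return. Otherwise pick an arbitrary $v \in V_{1/2}$ and branch two ways: in branch (a), add $v$ to the deletion set $X$ and recurse on $(\Psi - v, v_0, k' - 1)$; in branch (b), commit $v$ to the kept component by identifying $v$ with $v_0$ (equivalently, augmenting the LP with $x_v = 0$) and recurse with the same budget $k'$.

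Correctness is immediate, since every feasible solution either deletes $v$ or does not; in the latter case $v$ lies in the connected component of $v_0$ in $G - X$, so merging them is safe for the local problem. For the running time, I need to show that the measure drops by at least $\tfrac 1 2$ in each branch. For branch (a), $x$ restricted to $V \setminus \{v\}$ is feasible for the LP of $\Psi - v$ (any balloon of $\Psi - v$ is also a balloon of $\Psi$) and has weight $\lambda' - \tfrac 1 2$, so the new optimum $\lambda_a$ satisfies $\lambda_a \geq \lambda' - \tfrac 1 2$, giving new measure $(k'-1) - \lambda_a \leq \mu - \tfrac 1 2$. For branch (b), the new optimum $\lambda_b$ satisfies $\lambda_b \geq \lambda'$ (the added constraint only shrinks the feasible region) and is a half-integer by Lemma \ref{lemma:halfoptimum}, so it remains to rule out $\lambda_b = \lambda'$.

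The main obstacle is precisely this strict-increase step. The intended argument is that if $\lambda_b = \lambda'$, then the new LP has a half-integral optimum $y$ with $y_v = 0$ which is also an LP-optimum of the original; its reachable region $V_R^y$ contains $V_R \cup \{v\}$, and feeding $y$ into the derivation of Lemma \ref{lemma:halfoptimum} together with Lemma \ref{lemma:persistence} should produce an LP-optimum of the original in which $v$ has been absorbed into the reachable region at cost $\lambda'$, contradicting $v \in V_{1/2}$ in the canonical choice $x$. Once this $\tfrac 1 2$-decrease of $\mu$ in both branches is established, the search tree has depth at most $2(k - \lambda)$ and size at most $4^{k - \lambda}$, with polynomial work at each node via Theorem \ref{thm:optimise}, yielding the claimed $O^*(4^{k-\lambda})$ bound.
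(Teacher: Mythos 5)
Your overall architecture (branch on a half-integral vertex, track the measure $\mu=k-\lambda$, bound the depth by $2(k-\lambda)$) is the same as the paper's, but both halves of your measure analysis have gaps, and the one you flag yourself is not resolved by the argument you sketch. For branch (b), you need that constraining $x_v=0$ strictly increases the LP value for the vertex $v$ you branch on. Your contradiction attempt fails because distinct LP optima can have distinct reachable regions: the existence of an optimum $y$ with $v\in V_R^y$ does not contradict $v\in V_{1/2}$ for your particular $x$, and nothing in your setup makes $x$ canonical, since you pick an arbitrary $v\in V_{1/2}$ of an arbitrary half-integral optimum. The paper's fix is an explicit preprocessing step before each branching: repeatedly fix $v=0$ for unfixed vertices $v \in V_{1/2}$ and keep the new assignment whenever the LP cost does not increase, thereby absorbing such vertices into $V_R$ for free; only once $V_R$ is maximal in this sense does it branch, so that fixing any remaining $v\in V_{1/2}$ to $0$ strictly increases the cost, and half-integrality turns ``strictly'' into ``by at least $\tfrac12$''. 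Without this step a branching node can fail to decrease $\mu$ at all.

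For branch (a), your inequality is derived in the wrong direction: feasibility of $x$ restricted to $V\setminus\{v\}$ for the LP of $\Psi-v$ gives the upper bound $\lambda_a\le\lambda'-\tfrac12$, whereas the measure analysis needs the lower bound $\lambda_a\ge\lambda'-\tfrac12$. What one gets for free is only $\lambda_a\ge\lambda'-1$ (extend an optimum of $\Psi-v$ by $x_v=1$; every balloon through $v$ gives $v$ coefficient at least $1$), which yields no decrease of $\mu$ in that branch; closing the remaining half requires another strictness argument, about optima with $x_v=1$. Relatedly, both of your branch operations change the program (deleting $v$, or adding a hard constraint $x_v=0$, or merging $v$ into $v_0$), so Lemma~\ref{lemma:halfoptimum} and Lemma~\ref{lemma:persistence}, which are proved for the local LP of a fixed biased graph under arbitrary nonnegative vertex costs, no longer apply verbatim. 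The paper avoids this by implementing both fixings purely through the cost vector ($c_v=2n$ to force $x_v=0$ and $c_v=1/(3n)$ to force $x_v=1$), so that every recursive call still optimises the same LP with modified weights and the structural lemmas continue to apply; the tiny cost $1/(3n)$ is also what makes $\lambda$ drop by only about $\tfrac12$ rather than $1$ in the deletion branch.
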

\begin{proof}
  Assume $k<n$, as otherwise we may simply accept the instance. 
  We will execute a branching process, repeatedly selecting half-integral vertices
  and recursively ``forcing'' their variables to take values $x_v=0$ or $x_v=1$. 

  More precisely, we will use the terms \emph{fix $v=0$} and \emph{fix $v=1$} for the 
  following procedures.
  To fix $v=0$, we simply set $c_v=2n$; since the LP is half-integral in the presence 
  of vertex weights, this implies that either $x_v=0$ in a half-integral optimum or the LP-optimum costs
  at least $n>k$.  To fix $v=1$, we similarly set $c_v=0$. 
  Let us make a quick observation about this procedure.

  \begin{claim}
    Consider a situation where we have fixed $v=0$ for a set of vertices $A_0$,
    and fixed $v=1$ for a set of vertices $A_1$, where $A_0, A_1 \subseteq V$.
    Let $\lambda$ be the resulting optimal value of the LP-relaxation.
    Then the following hold.
    \begin{enumerate}
    \item If there is a set $S \subseteq V \setminus (A_0 \cup A_1)$
      such that $A_1 \cup S$ is a solution to \localproblem,
      then $|S| \geq \lambda$
    \item If there is an integral solution to the LP of cost $\lambda < n$,
      setting $x_v=1$ for some set of vertices $S \subseteq V$, then
      $S \cap A_0 = \emptyset$, and $S \cup A_1$ is a solution to \localproblem
      of cardinality $|A_1|+\lambda$. 
    \end{enumerate}
  \end{claim}
  \begin{proof}[Proof of claim]
    The only difference incurred by fixing vertices is a change of vertex weights $c_v$.  
    Therefore, in the first case, if there is a set $S$ such that $S \cup A_1$ is a 
    solution, then setting $x_v=1$ for $v \in S \cup A_1$ is a feasible solution to the LP 
    of cost $|S|$, hence $|S| \geq \lambda$. In the second case, as observed above,
    any half-integral solution of cost $\lambda < n$ must set $x_v=0$ for $v \in A_0$.
    Therefore, the statement follows since the set of solutions is upwards closed. 
  \end{proof}

  We now describe the branching process. Consider a generic situation, 
  where some sets $A_0$ resp.~$A_1$ of vertices have been fixed to $v=0$
  resp.~$v=1$, and where we have remaining budget $k$ to spend on non-fixed vertices.
  Compute a half-integral optimum $V_1+\tfrac 1 2 V_{1/2}$ for the local LP as above,
  and reject the instance if $\lambda > k$.
  Adjust this optimum such that the corresponding set $V_R$ is 
  as large as possible, by repeatedly fixing $v=0$
  for non-fixed vertices $v \in V_{1/2}$, and keeping $v$ fixed if
  the cost of the LP does not increase. 
  Assume first that this leaves $V_{1/2}$ with only fixed vertices
  Then we must have $V_{1/2} \subseteq A_1$, since otherwise we would have $\lambda > k$, 
  and the solution where $x_v=1$ for every $v \in V_1 \cup V_{1/2} \cup A_1$ is an integral LP-optimum
  since fixed vertices have cost $c_v=0$.  In particular, we can compute 
  the set $V_R$ of reachable vertices around $v_0$ as previously, and $X=N(V_R)$
  will be a possibly smaller solution of at most $k+|A_1|$ vertices.

  Otherwise, fix $v=0$ for every vertex $v \in V_R$, 
  fix $v=1$ for every vertex $v \in V_1$, 
  and reduce $k$ by $|V_1 \setminus A_1|$.
  As observed previously, since $V_1 + \tfrac 1 2 V_{1/2}$
  is an LP-optimum of a persistent LP, these vertices can all be fixed 
  without losing any integral optimum. 
  Select an unfixed vertex $v \in V_{1/2}$, which exists by assumption, 
  and branch recursively on fixing $v=0$ and fixing $v=1$, in the latter branch 
  reducing our budget $k$ by one, and solve the problem recursively. 
  Then in the branch $v=0$, $\lambda$ increases by at least $1/2$,
  since the LP-relaxation has half-integral costs and fixing $v=0$ increases the cost;
  and in the branch $v=1$, $\lambda$ decreases by $1/2$ due to the change of vertex cost,
  whereas $k$ decreases by at least $1$. Hence in both branches the value of
  $k-\lambda$ decreases by at least $1/2$, which means that after 
  a branching depth of at most $2(k-\lambda)$, each branch has either
  terminated or produced a solution. 
  Thus the total running time is $O^*(2^{2(k-\lambda)})$ as promised.
\maybeqed
\end{proof}

\begin{lemma}
  \localproblem for unweighted graphs can be solved in $O^*(2^k)$ time.
\end{lemma}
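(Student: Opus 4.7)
The plan is to combine the half-integrality guarantee from Lemma~\ref{lemma:halfoptimum} with the gap-parameterised running time of Lemma~\ref{lemma:thm:gap}, using the former as a case-splitter. First I would compute a half-integral LP-optimum $x = V_1 + \tfrac 1 2 V_{1/2}$ of value $\lambda = |V_1| + \tfrac 1 2 |V_{1/2}|$ via the separation oracle of Theorem~\ref{thm:optimise}, and reject outright whenever $\lambda > k$, since $\lambda$ is a valid lower bound on any integral solution.

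Next I would handle the ``low-gap regime'' directly, without any branching. If $\lambda \leq k/2$, then the set $X = V_1 \cup V_{1/2}$ has size $|X| = |V_1| + |V_{1/2}| \leq 2\lambda \leq k$, and it is a valid integral solution by exactly the argument used for the 2-approximation (so that $G - X$ lies inside the balanced, connected region $G[V_R]$ around $v_0$). In this case I simply return $X$ in polynomial time.

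In the only remaining case we have $k/2 < \lambda \leq k$, so $k - \lambda < k/2$. Invoking Lemma~\ref{lemma:thm:gap} then produces a solution in time at most $O^*(4^{k/2}) = O^*(2^k)$, as required. I do not expect any genuine obstacle: the inequality $\lambda \geq k/2$ mentioned in the statement of Theorem~\ref{thm:local} is precisely the hypothesis that closes the case analysis, and all the delicate branching work—controlling the change in $k - \lambda$ on each side of a branch through the fixings $v=0$ and $v=1$—has already been carried out inside the proof of Lemma~\ref{lemma:thm:gap}.
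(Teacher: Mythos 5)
Your proposal is correct and follows essentially the same argument as the paper: split on whether $\lambda \leq k/2$ (in which case rounding up the half-integral optimum to $X=V_1 \cup V_{1/2}$ gives a solution of size at most $2\lambda \leq k$) or $\lambda > k/2$ (in which case Lemma~\ref{lemma:thm:gap} runs in time $O^*(4^{k-\lambda}) \leq O^*(2^k)$). The extra bookkeeping you add (rejecting when $\lambda > k$) is harmless and already implicit in Lemma~\ref{lemma:thm:gap}.
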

\begin{proof}
  If $\lambda \leq k/2$, then we may produce a solution of size at most $k$
  by rounding up the LP-optimum. Otherwise $\lambda > k/2$ 
  and the result follows from Lemma~\ref{lemma:thm:gap} since
  $4^{k-\lambda}=2^{2(k-\lambda)} < 2^{2(k-k/2)}=2^k$. 
\maybeqed
\end{proof}

This concludes the proof of Theorem~\ref{thm:local}.

\subsection{\problem}

We now finally show the full solution to \problem.

\begin{proof}[Proof of Theorem~\ref{thm:global}]
  Select an arbitrary vertex $v_0 \in V$ and branch over two options: either delete $v_0$,
  or decide that $v_0 \notin X$ and proceed to solve the local problem rooted in $v_0$.  
  In the latter case, compute a half-integral optimum $V_1 + \tfrac 1 2 V_{1/2}$ of the local
  LP for which the set $V_R$ of reachable vertices is maximal, as in
  Lemma~\ref{lemma:thm:gap}. 
  We claim that under the assumption that there is an optimum $X \subseteq V$
  to the global problem with $v_0 \notin X$,
  there is such an optimum with 
  $V_1 \subseteq X$ and $V_R \subseteq V(H)$ for some connected component $H$ 
  of $G - X$. 
  
  For this, let $Y$ be an optimum with $v_0 \notin Y$, and let $H$ be the connected 
  component of $G - Y$  for which $v_0 \in V(H)$. Applying Lemma~\ref{lemma:persistence} 
  to $V(H)$ gives us the sets~$S^+=N[V(H) \cup V_R]$
  and $S' \supseteq V_R$. Let $Y' = (Y \setminus S^+) \cup (S^+ \setminus S')$. 
  Since $N[V(H)] \subseteq S^+$  we have $N(V(H)) \subseteq Y \cap S^+$, and 
  by Lemma~\ref{lemma:persistence} we have $|N(V(H))| \geq |S^+ \setminus S'|$,
  hence $|Y'| \leq |Y|$. We also have $V_1 \subseteq Y'$, and $G - Y'$ contains a connected
  component $H'$ with $V_R \subseteq V(H')$. We claim that $Y'$ is a solution. Assume to 
  the contrary that there is some unbalanced cycle $C$ with $V(C) \cap Y' = \emptyset$. 
  Then $C$ intersects $Y$ in $S^+ \setminus Y'=S'$. But since $N(S') \subseteq Y'$ 
  this contradicts that $G[S']$ is balanced. Hence $Y'$ is also an optimal
  solution, and the claim is shown. 

  Hence, we may fix $v=0$ for every $v \in V_R$, and $v=1$ for every $v \in V_1$,
  and proceed as in Lemma~\ref{lemma:thm:gap} until the vertices fixed to $0$
  contain a connected component containing $v_0$, surrounded entirely by vertices
  fixed to~$1$. In such a case, we simply proceed as above with a new starting vertex 
  $v_0$ in a non-balanced connected component of $G$,
  until we either exceed our budget $k$ or discover an integral solution $X$,
  and we are done. As in Lemma~\ref{lemma:thm:gap}, while branching on a local LP
  the gap between lower bound and remaining budget decreases in both branches, 
  whereas branching on a new vertex $v_0$ will certainly increase the solution cost,
  since the previous solution at this point does not account for any vertices
  in the connected component of $v_0$. 
  Hence we have a tree with a branching factor of 2 and a height of at most $2k$,
  implying a total size and running time of $O^*(4^k)$, and we are done.
\maybeqed
%%%Not SODA:
\end{proof}

\subsection{Lower bounds}

We show two lower bounds for \problem, one unconditional in the
black-box oracle model, and one conditional on SETH (the strong
exponential-time hypothesis).  More concretely, let us say that an
algorithm solves \problem in the \emph{black-box oracle model}
if it takes as input $(\Psi=(G=(V,E), \cB), k)$ where $\Psi$ is a
biased graph and $\cB$ is provided purely in the form of a membership
oracle. The result is simply the more carefully worked-out version of
the lower bound informally announced in the introduction. 

\begin{theorem} \label{thm:oracle-lb}
  Every algorithm that solves \problem in the black-box oracle model
  needs to make $\Omega(2^k)$ membership queries in the worst case,
  both in the edge- and vertex-deletion versions. 
\end{theorem}
\begin{proof}
  We show the result for the edge-deletion version.  It is standard to
  reduce this to the vertex-deletion version by a suitable transformation.
  
  Describe a graph $G_k$ as follows. Start with the simple cycle $C_k$,
  with vertex set $V=\{v_1,\ldots,v_k\}$ and edges $v_iv_{i+1}$, where
  $v_{k+1}=v_1$ (i.e., vertex numbers are read cyclically along $C_k$).
  then replace every edge $uv$ in $C_k$ by a $C_4$, on vertices $u-x^1_{uv}-v-x^2_{uv}-u$, where $x^1_{uv}$, $x^2_{uv}$ are new vertices. Pick an arbitrary vector $b \in \{0,1\}^k$.
  We next define a biased graph $\Psi_{b,k}=(G_k, \cB_b)$ where $\cB_b$
  consists of the single cycle $C_b$ on vertex set
  $V(C_b)=\{v_1, \ldots, v_k\} \cup \{x_{v_iv_{i+1}}^{b_i}: i \in [k]\}$,
  where again $v_{k+1}=v_1$. That is, precisely one out of all $2^k$
  ways of traversing the original cycle $C_k$ is considered balanced.
  It is easy to see that $\Psi_k$ is a biased graph, since every
  unbalanced cycle $C$ and chord path $P$ for $C$ contains two
  distinct possible reconfigurations of $C$, one of which is distinct
  from $C_b$.  We claim that $(\Psi_{b,k},k)$ is positive for every
  vector $b$, but has an essentially unique solution. Indeed, since
  $G_k$ contains $k$ pairwise edge-disjoint unbalanced $C_4$'s, every
  deletion set of size $k$ must contain precisely one edge for each
  $C_4$, and this leaves precisely one cycle $C_{b'}$ spanning $C_k$,
  where $C_{b'}$ is defined as $C_b$ according to some vector $b' \in \{0,1\}^k$. 
  Since the choice of $b$ was arbitrary, the only way to detect in a
  black-box fashion which choice of deletions is feasible is to probe
  each such cycle $C_{b'}$ until the choice $b'=b$ is found.
\end{proof}

Via known lower bounds for \textsc{Unique $k$-SAT}, we can also show
a similar lower bound for explicitly represented instances assuming
SETH (the Strong Exponential-Time Hypothesis)~\cite{CalabroIKP08JCSS,ImpagliazzoP01,CyganDLMNOPSW16TALG}.
Recall that SETH is the hypothesis that for every $\varepsilon > 0$
there is a $k \in \N$ such that $k$-SAT cannot be solved in time
$O^*(2^{(1-\varepsilon)n})$.

\begin{corollary}
  Unless SETH is false, there is no algorithm that solves \problem
  in time $O^*((2-\varepsilon)^k)$ for any $\varepsilon > 0$, even if
  the class of balanced cycles of the input graph is provided 
  through an explicit circuit. 
\end{corollary}
\begin{proof}
  \textsc{Unique $k$-SAT} is the promise problem where the input is a
  $k$-CNF formula $F$ with the promise that $F$ has at most one
  satisfying assignment, and the question is whether $F$ is
  satisfiable. Calabro et al.~\cite{CalabroIKP08JCSS} show that under
  SETH, $k$-SAT and \textsc{Unique $k$-SAT} have the same asymptotic
  complexity.  In other words, for every $\varepsilon > 0$ there is a
  $k \in \N$ such that if \textsc{Unique $k$-SAT} can be solved in
  time $O^*(2^{(1-\varepsilon)n})$, then SETH is false.
  
  Assume that \problem can be solved in time $O^*((2-\varepsilon)^k)$
  for some $\varepsilon > 0$, and let $q \in \N$ be such that
  \textsc{Unique $q$-SAT} cannot be solved in $O^*(2^{(1-\varepsilon)n})$
  time under SETH. Let $F$ be a an instance of \textsc{Unique $q$-SAT},
  and let $k$ be the number of variables in $F$. Fix an arbitrary
  ordering on the variables of $F$. 
  Construct the graph $G_k$ as in Theorem~\ref{thm:oracle-lb},
  and let the class of balanced cycles $\cB$ be provided as a
  membership oracle, where a cycle $C_b$, $b \in \{0,1\}^k$ is
  balanced if and only the assignment $b$ satisfies $F$ (interpreted
  according to the ordering of variables of $F$). Then clearly,
  $\cB$ can be provided as a polynomial-time membership oracle,
  and due to the promise property of \textsc{Unique $q$-SAT}
  the resulting class $\cB$ is a balanced class of cycles.
  Furthermore, the instance $(\Psi=(G_k,\cB),k)$ has a solution if and
  only if $F$ has a satisfying assignment.  
\end{proof}

Finally, regarding the special case of \textsc{Group Feedback Vertex Set} (GFVS),
we note that Theorem~\ref{thm:oracle-lb} can easily be extended to give the
same lower bound for GFVS when the group is represented in a black-box
manner using the group $Z_2^n$ (see Cygan et al.~\cite{CyganPP16GFVS}
for a discussion on black-box representations of GFVS instances). 
However, we are not aware of any lower bound for the running time of
\textsc{Group Feedback Vertex Set} in any explicit group representation.

\section{Polynomial time approximation}
In this section, we give a polynomial time $O(\log k)$-approximation algorithm for the weighted version of \problem,
where $k$ denotes the cardinality of the optimal solution. 
The algorithm uses the standard {\em region growing} argument used for \textsc{Multicut} and \textsc{Sparsest Cut}~\cite{LeightonR99, GargVY96}, with a simple observation that leads to bound the approximation ratio by $O(\log k)$ instead of $O(\log n)$. 
Given a weighted biased graph $\Psi = (G = (V, E), \cB)$ with weights $c : V \to \Q^+$ and the corresponding class of unbalanced simple cycles $\cC$, we consider the following {\em global} LP relaxation that has $\{ x_v \}_{v \in V}$ as variables.

\begin{align}
\mbox{minimize }& \sum_{v \in V} c_v x_v \\
\mbox{subject to }& \sum_{v \in C} x_v \geq 1 && \forall C \in \cC \\
& \sum_{v \in V} x_v \leq k &&  \\ 
& x_v \geq 0 && \forall v \in V
\end{align}

\begin{lemma}
Given access to a polynomial-time membership oracle for $\cB$, the global LP relaxation admits a polynomial-time separation oracle. 
\end{lemma}
\begin{proof}
  By Theorem~\ref{thm:optimise} there is a separation oracle for the
  balloon constraints of the local LP, for any root vertex $v_0 \in
  V(G)$. We reuse these as constraints in the global LP as follows. 
  Let $uv \in E(G)$ be an arbitrary edge, and let $v_0$ be a new
  vertex subdividing $uv$ in $G$.  Set $x_{v_0}=0$.  We now run the
  separation oracle for the local LP rooted in $v_0$.  Repeating this
  for every edge of $G$ will yield our separation oracle for the
  global LP.  The oracle is clearly polynomial time; we show
  correctness. 

  By Lemma~\ref{lemma:decompositions}, for every $v_0$-balloon $B$
  involved in a constraint $w(B) \geq 1$, the balloon $B$ contains an
  unbalanced cycle $C$ and $w(B) \geq w(C):=\sum_{v \in C} x_v$.
  Therefore every constraint $w(B) \geq 1$ of the local LP rooted in
  $v_0$ is a valid constraint for the global LP.  In the other
  direction, let $C$ be an unbalanced cycle such hat $w(C)<1$ and pick
  a root $v_0$ on an edge $uv$ of $C$. Then the balloon $B=(v_0,C)$ is
  a valid $v_0$-balloon, where $v_0$ represents the path at $v_0$ with
  no edges. Thus $w(B)=w(C)<1$, and the local LP rooted at $v_0$
  contains a violated constraint $w(B) \geq 1$. One such constraint
  will be found by the separation oracle.
\end{proof}

We define quantities similar to previous sections in a slightly different manner. 
For a path $P$, we let $\ell_x(P) = \sum_{v \in P} x_v$. For $u, v \in V$, we let $z_x(u, v) = \min_P \ell_x(P)$ ranging over all $u$-$v$-path $P$. 
Note that the current definition of $\ell_x(P)$ differs from the previous definition by $\frac{1}{2}(x_u + x_v)$, where $u, v$ are the endpoints of $P$. 
We omit subscripts $x$ if they are clear from the context. 
The {\em region} around $v$ of radius $r$ is denoted by $R_{v, r}$ and consists of the following two components.
\begin{itemize}
\item Interior $I_{v, r} = \{ u \in V : z(u, v) < r \}$.
\item Boundary $\partial R_{v, r} = \{ u \in V : z(u, v) - x_u < r \leq z(u, v)  \}$. 
\end{itemize}

The LP constraints ensure that an interior of small radius does not have an unbalanced cycle. 

\begin{lemma}
For any $v \in V$ and $r \leq 1/4$, $I_{v, r}$ does not have any unbalanced cycle. 
\end{lemma}
\begin{proof}
  Consider the alternative assignment $x'$ where $x_u'=x_u$  for every
  $u \in I_{v,r}$ and $x_u'=1$ otherwise. Assume that $I_{v,r}$
  contains an unbalanced cycle, and let $C$ be such a cycle of minimum
  weight.  
  We claim that for any pair of vertices $u, w \in C$, there is a
  shortest $u$-$w$-path (under the distance $z_{x'}(u,w)$) that
  follows $C$.  Assume to the contrary and let $u, w \in V$ be
  vertices such that both $u$-$w$-paths along $C$ are longer than
  $z_{x'}(u,w)$, with $u, w$ chosen to minimise $z_{x'}(u,w)$. 
  Let $P$ be a shortest $u$-$w$-path. By the choice of $u, w$,
  only the endpoints of $P$ lie in $C$, i.e., $P$ is a chord path for
  $C$.  But then both possible results of reconfiguring $C$ using $P$
  yield a cycle $C'$ shorter than $C$. This contradicts the choice of
  $C$.

  Now pick a vertex $u \in C$, and let $w_1, w_2 \in C$ be neighbours in $C$
  such that $C$ decomposes into a shortest $u$-$w_1$-path and a shortest
  $u$-$w_2$-path.  Observe that $w_1, w_2$ must exist.
  Then the weight of $C$ is $w(C):=\sum_{w \in C} x_w = z_{x'}(u,w_1)+z_{x'}(u,w_2)-x_u$.
  But since $I_{v,r}$ has radius less than $1/4$, all distances within $I_{v,r}$
  are less than $1/2$.  Thus $w(C) \leq z_{x'}(u,w_1)+z_{x'}(u,w_2) < 1$.
  Since the weight of $C$ is equal under $x$ and $x'$, the cycle $C$
  represents a constraint violated in the LP.
\end{proof}

For $U \subseteq V$, let $x(U) = \sum_{u \in U} x_u$, 
$c(U) = \sum_{u \in U} c_u$, 
$LP(U) = \sum_{u \in U} c_u x_u$.
Intuitively, a region $R_{v, r}$ entirely contains a vertex $u \in I_{v, r}$ in its interior, and for $w \in \partial R_{v, r}$, only a $\frac{r - (z(w, v) - x_w)}{x_w}$ fraction of it is contained in $R_{v, r}$. 
To be consistent with this intuition, let
\begin{align*}
x(R_{v, r}) &= x(I_{v, r}) + \sum_{w \in \partial R_{v, r}} (r - (z(w, v) - x_w)) \\
LP(R_{v, r}) &= LP(I_{v, r}) + \sum_{w \in \partial R_{v, r}} c_w (r - (z(w, v) - x_w)).
\end{align*}

After obtaining the LP solution $x$, the rounding algorithm proceeds as follows. 
The graph $G = (V, E)$ is modified throughout the algorithm. 
Let $\opt = LP(V)$ be the initial LP value. 
Other definitions including $R_{v, r}$ are with respect to the current graph. 
Note that for any $U \subseteq V$, the LP solution $x$ restricted to $U$ is a feasible solution for $G[U]$. 

\begin{enumerate}
\item Choose an arbitrary vertex $v \in V$. 
\item Find the smallest $r \geq 1/8$ such that $c(\partial R_{v, r}) \leq (16 \ln k) \cdot  (LP(R_{v, r}) + \opt / k)$. 
\item Delete $\partial R_{v, r}$, which will separate $I_{v, r}$ from the rest of the graph. Let $V \leftarrow V \setminus (I_{v,r} \cup \partial R _{v, r})$. 
\item Repeat from Step 1 as long as $V \neq \emptyset$. 
\end{enumerate}

The standard analysis of the region growing process ensures the following fact. 
\begin{lemma}
In Step 2, above, the smallest $r$ is always at most $1/4$. 
\end{lemma}
\begin{proof}
Let $y_r := LP(R_{v, r}) + \opt / k$, and note that $c(\partial R_{v, r}) = \frac{d y_r}{dr}$. If the smallest $r$ is greater than $1/4$, 
$\frac{d y_r}{dr} = c(\partial R_{v, r}) > (16 \ln k) y_r$ for $r \in [1/8, 1/4]$, which implies 
\[
y_{1/4} > e^{(16 \ln k) \cdot (1/8 - 1/4)} \cdot y_{1/8} = k^2 \cdot y_{1/8}. 
\]
Since $y_{1/4} \leq \opt(1 + 1/k)$ and $y_{1/8} \geq \opt/k$, it leads to contradiction when $k \geq 2$. 
\end{proof}

Therefore, each interior $I_{v, r}$ separated from the graph has no unbalanced cycle. 
Furthermore, notice that $x(R_{v, r}) \geq r \geq 1/8$ and all regions are disjoint, so
the constraint $\sum_v x_v \leq k$ implies that the above algorithm runs in at most $8k$ iterations.
In each iteration, the weight of deleted vertices is at most $(16 \ln k) \cdot (LP(R_{v, r}) + \opt / k)$.
Since the sum of the first terms is at most $\opt$ and there are at most $8k$ iterations, 
the total weight of deleted vertices is at most 
\[
(16 \ln k) \cdot \bigg( \opt + 8k \cdot \opt / k \bigg) \leq O(\ln k) \cdot \opt,
\]
giving an $O(\log k)$-approximation algorithm for \problem. 

\section{Conclusions}

We have shown that the combinatorial notion of \emph{biased graphs}, 
especially the notion of \emph{co-linear cycle classes}, allows
us to formulate an LP-branching FPT algorithm for a surprisingly
broad class of problems, including the full generality of the
\problem parameterized by $k$, and \localproblem parameterized 
by relaxation gap. 
Compared to previous results~\cite{IwataWY16}, these algorithms
are somewhat more general, and significantly more grounded in
combinatorial notions.
We also showed that \problem admits an $O(\log k)$-approximation,
where $k$ is the solution size. 

Open problems include completely combinatorial FPT algorithms,
and settling the associated kernelization questions. 

\bibliographystyle{abbrv}
\bibliography{all}
\end{document}